\begin{document}

\title{\Large On Termination of Integer Linear Loops}
\author{
Jo\"el Ouaknine\thanks{Supported by EPSRC.} \\
Department of Computer Science\\ 
Oxford University, UK
\and
João Sousa Pinto\thanks{Supported by the ERC Advanced Grant 321171 (ALGAME) and by EPSRC.} \\
Department of Computer Science\\ 
Oxford University, UK
\and 
James Worrell \\
Department of Computer Science\\ 
Oxford University, UK
} 

\date{}

\maketitle
\begin{abstract} 
A fundamental problem in program verification concerns the 
termination of simple linear loops of the form:
\begin{equation*}
 \mbox{$\boldsymbol{x}\gets \boldsymbol{u}$ ; 
\textit{while} $B\boldsymbol{x} \geq \boldsymbol{c}$ \textit{do} 
$\boldsymbol{x}\leftarrow A\boldsymbol{x}+\boldsymbol{a}$\,,}
\end{equation*}
where $\boldsymbol{x}$ is a vector of variables, $\boldsymbol{u}$,
$\boldsymbol{a}$, and $\boldsymbol{c}$ are integer vectors, and $A$ and
$B$ are integer matrices.  Assuming the matrix $A$ is diagonalisable,
we give a decision procedure for the problem of whether, for all
initial integer vectors $\boldsymbol u$, such a loop terminates.  The
correctness of our algorithm relies on sophisticated tools from
algebraic and analytic number theory, Diophantine geometry, and real
algebraic geometry.

To the best of our knowledge, this is the first substantial advance on
a 10-year-old open problem of Tiwari~\cite{Tiw04} and
Braverman~\cite{Bra06}.
\end{abstract}
   
\section{Introduction}
\label{sec:introduction}
Termination is a fundamental decision problem in program verification.
In particular, termination of programs with linear assignments and
linear conditionals has been extensively studied over the last decade.
This has led to the development of powerful techniques to
prove termination via synthesis of linear ranking
functions~\cite{Ben-AmramG13,BradleyMS05,ChenFM12,ColonS01,PodelskiR04},
many of which have been implemented in software-verification tools, such as
Microsoft's \textsc{Terminator}~\cite{CookPR06}.

A very simple form of linear programs are \emph{simple linear
  loops}, that is, programs of the form
\begin{gather*}
\mathsf{P1:}\  \mbox{$\boldsymbol{x}\gets \boldsymbol{u}$ ; 
\textit{while} $B\boldsymbol{x} \geq \boldsymbol{c}$ \textit{do} 
$\boldsymbol{x}\leftarrow A\boldsymbol{x}+\boldsymbol{a}$,}
\end{gather*}
where $\boldsymbol{x}$ is vector of variables, $\boldsymbol{u}$,
$\boldsymbol{a}$, and $\boldsymbol{c}$ are integer vectors, and $A$
and $B$ are integer matrices of the appropriate dimensions.  Here the
loop guard is a conjunction of linear inequalities and the loop body
consists of a simultaneous affine assignment to $\boldsymbol{x}$.  If
the vectors $\boldsymbol{a}$ and $\boldsymbol{c}$ are both zero then
we say that the loop is \emph{homogeneous}.

Suppose that the vector $\boldsymbol{x}$ has dimension $d$.  We say
that \textsf{P1} \emph{terminates} on a set $S\subseteq \mathbb{R}^d$
if it terminates for all initial vectors $\boldsymbol{u} \in S$.
Tiwari~\cite{Tiw04} gave a procedure to decide whether a given simple
linear loop terminates on $\mathbb{R}^d$.  Later
Braverman~\cite{Bra06} showed decidability of termination on
$\mathbb{Q}^d$.  However the most natural problem from the point of
view of program verification is termination on $\mathbb{Z}^d$.  

While termination on $\mathbb{Z}^d$ reduces to termination on
$\mathbb{Q}^d$ in the homogeneous case (by a straightforward scaling
argument), termination on $\mathbb{Z}^d$ in the general case is stated
as an open problem in~\cite{BGM12,Bra06,Tiw04}.  The main result of
this paper is a procedure to decide termination on $\mathbb{Z}^d$ for
simple linear loops when the assignment matrix $A$ is diagonalisable.
This represents the first substantial progress on this open problem in
over $10$ years.

Termination of more complex linear programs can often be reduced to
termination of simple linear loops (see, e.g.,~\cite{CookPR06}
or~\cite[Section 6]{Tiw04}).  On the other hand, termination becomes
undecidable for mild generalisations of simple linear loops, for
example, allowing the update function in the loop body to be piecewise
linear~\cite{BGM12}.

To prove our main result we focus on \emph{eventual non-termination},
where \textsf{P1} is said to be eventually non-terminating on
$\boldsymbol{u} \in \mathbb{Z}^d$ if, starting from initial value
$\boldsymbol{u}$, after executing the loop body $\boldsymbol{x} \gets
A\boldsymbol{x}+\boldsymbol{a}$ a finite number of times \emph{while
  disregarding the loop guard} we eventually reach a value on which
\textsf{P1} fails to terminate.  Clearly \textsf{P1} fails to
terminate on $\mathbb{Z}^d$ if and only if it is eventually
non-terminating on some $\boldsymbol{u} \in \mathbb{Z}^d$.

Given a simple linear loop we show how to compute a convex
semi-algebraic set $W \subseteq \mathbb{R}^d$ such that the integer
points $\boldsymbol{u} \in W$ are precisely the eventually
non-terminating integer initial values.  Since it is decidable whether
a convex semi-algebraic set contains an integer
point~\cite{KhachiyanP97},\footnote{By contrast, recall that the
  existence of an integer point in an \emph{arbitrary} (i.e., not
  necessarily convex) semi-algebraic set---which is equivalent to
  Hilbert's tenth problem---is well-known to be undecidable.} we can
decide whether an integer linear loop is terminating on
$\mathbb{Z}^d$.

Termination over the set of all integer points is easily seen to be
\textbf{coNP}-hard.  Indeed, if the update function in the loop body
is the identity then the loop is non-terminating if and only if there
is an integer point satisfying the guard.  Thus non-termination
subsumes integer programming, which is \textbf{NP}-hard.  By contrast,
even though not stated explicitly in~\cite{Tiw04} and~\cite{Bra06},
deciding termination on $\mathbb{R}^d$ and $\mathbb{Q}^d$ can be done
in polynomial time.\footnote{This observation relies on the facts that
  one can compute Jordan canonical forms of integer matrices and solve
  instances of linear programming problems with algebraic numbers in
  polynomial time~\cite{Cai94,AdlerB94}.}

While our algorithm for deciding termination requires exponential
space, it should be noted that the procedure actually solves a more
general problem than merely determining the existence of a
non-terminating integer point (or, equivalently, the existence of an
eventually non-terminating integer point).  In fact the algorithm
computes a representation of the set of all eventually non-terminating
integer points.  For reference, the closely related problem of
deciding termination on the integer points in a given convex polytope is
\textbf{EXPSPACE}-hard \cite{BGM12}.

As well as making extensive use of algorithms in real algebraic
geometry, the soundness of our decision procedure relies on powerful
lower bounds in Diophantine approximation that generalise Roth's
Theorem.  (The need for such bounds in the inhomogeneous setting was
conjectured in the discussion in the conclusion of~\cite{Bra06}.)  We
also use classical results in number theory, such as the
Skolem-Mahler-Lech Theorem~\cite{Lec53,Mah35,Sko34} on linear
recurrences.  Crucially the well-known and notorious ineffectiveness
of Roth's Theorem (and its higher-dimensional and $p$-adic
generalisations) and of the Skolem-Mahler-Lech Theorem are not a
problem for deciding \emph{eventual} non-termination, which is key to
our approach.

\subsection{Related Work}

Consider the termination problem for a homogeneous linear loop program
\begin{gather*}
\mathsf{P2:}\  \mbox{$\boldsymbol{x}\gets \boldsymbol{u}$ ;
\textit{while} $B\boldsymbol{x} \geq 0$ \textit{do} $\boldsymbol{x}\leftarrow A\boldsymbol{x}$}
\end{gather*}
on a single initial value $\boldsymbol{u}\in \mathbb{Z}^d$.  Each row
$\boldsymbol{b}^T$ of matrix $B$ corresponds to a loop condition
$\boldsymbol{b}^T\boldsymbol{x} \geq 0$.  For each such condition,
consider the integer sequence $\langle x_n : n \in \mathbb{N} \rangle$
defined by $x_n = \boldsymbol{b}^TA^n \boldsymbol{u}$.  Then
\textsf{P2} fails to terminate on an initial value $\boldsymbol{u}$ if
and only if each such sequence $\langle x_n \rangle$ is
\emph{positive}, i.e., $x_n \geq 0$ for all $n$.  It is not difficult
to show that each sequence $\langle x_n \rangle$ considered above is a
\emph{linear recurrence sequence}, thanks to the Cayley-Hamilton
theorem.  Thus deciding whether a homogeneous linear loop program
terminates on a given initial value is at least as hard as the
\emph{Positivity Problem} for linear recurrence sequences, that is,
the problem of deciding whether a given linear recurrence sequence has
exclusively non-negative terms.

The Positivity Problem has been studied at least as far back as the
1970s~\cite{BG07,HHH06,Liu10,RS94,Sal76}.  Thus far decidability is
known only for sequences satisfying recurrences of order $5$ or less.
It is moreover known that showing decidability at order $6$ will
necessarily entail breakthroughs in transcendental number theory,
specifically significant new results in Diophantine
approximation~\cite{OW14:SODA}.  

The key difference between studying termination of simple linear loops
over $\mathbb{Z}^d$ rather than a single initial value is that the
former problem can be approached through eventual termination.  In
this sense the termination problem is related to the \emph{Ultimate
  Positivity Problem} for linear recurrence sequences, which asks
whether all but finitely many terms of a given sequence are
positive~\cite{OuaknineW14a}.  This allows us to bring to bear powerful
non-effective Diophantine-approximation techniques,
specifically the $S$-units Theorem of Evertse, van der Poorten, and
Schlickewei~\cite{Evertse84,PS82}.  Such tools enable us to obtain
decidability of termination for matrices of arbitrary dimension,
assuming diagonalisability.

The paper~\cite{COW14:SODA} studies higher dimensional versions of
Kannan and Lipton's Orbit Problem~\cite{KL86}.  These can be seen as
versions of the termination problem for linear loops on a fixed
initial value.  That work uses substantially different technology from
that of the current paper, including Baker's Theorem on linear forms
in logarithms~\cite{BW93}, and correspondingly relies on restrictions
on the dimension of data in problem instances to obtain decidability.

Termination of $\mathsf{P1}$ under the assumption that all eigenvalues
of $A$ are real was studied in \cite{Jokers,Jokers1} using spectral
techniques. However, as will become clear throughout the course of
this paper, most of the machinery that we use is needed to tackle the
case where there are both real and complex eigenvalues with the same
absolute value. In the setting of \cite{Jokers,Jokers1}, the set of
eventually non-terminating points is in fact a polytope, which can be
effectively computed resorting only to straightforward linear algebra.

While we use spectral and number-theoretic techniques in this paper,
another well-studied approach for proving termination of linear loops
involves designing linear ranking functions, that is, linear functions
from the state space to a well-founded domain such that each iteration
of the loop strictly decreases the value of the ranking
function. However, this approach is incomplete: it is not hard to
construct an example of a terminating loop which admits no linear
ranking function.  Sound and relatively complete methods for
synthesising linear ranking functions can be found in
\cite{PodelskiR04} and \cite{Ben-AmramG13}. Whether a linear
ranking function exists can be decided in polynomial time when the
state space is $\mathbb{Q}^d$ and is \textbf{coNP}-complete when the
state space is $\mathbb{Z}^d$.

\section{Overview of Main Results}
\label{sec:overview}
The main result of this paper is as follows:
\begin{theorem}
  The termination over the integers of simple linear loops of the form
\begin{gather*}
\mathsf{P1:}\  \mbox{$\boldsymbol{x}\gets \boldsymbol{u}$ ; 
\textit{while} $B\boldsymbol{x} \geq \boldsymbol{c}$ \textit{do} 
$\boldsymbol{x}\leftarrow A\boldsymbol{x}+\boldsymbol{a}$}
\end{gather*}
is decidable using exponential space if $A$ is diagonalisable and
using polynomial space if $A$ has dimension at most $4$.
\label{thm:main}
\end{theorem}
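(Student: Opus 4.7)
The plan is to reduce termination on $\mathbb{Z}^d$ to the integer-feasibility problem for a convex semi-algebraic set, which is decidable by \cite{KhachiyanP97}. First I would pass from non-termination to \emph{eventual} non-termination, which preserves the answer on $\mathbb{Z}^d$ and is what unlocks the use of ineffective Diophantine tools. After diagonalising $A$ and shifting by a fixed point $\boldsymbol{x}^*$ of $\boldsymbol{x} \mapsto A\boldsymbol{x} + \boldsymbol{a}$ (when $I-A$ is invertible; the eigenvalue~$1$ case is absorbed into a constant drift and handled separately), each loop inequality $\boldsymbol{b}^T \boldsymbol{x}_n \geq c$ becomes the non-negativity of a scalar exponential sum
\begin{equation*}
s_n = \sum_{i=1}^k \alpha_i(\boldsymbol{u}) \lambda_i^n + \gamma,
\end{equation*}
where the $\lambda_i$ are the eigenvalues of $A$ and the coefficients $\alpha_i(\boldsymbol{u})$ depend linearly on the initial value $\boldsymbol{u}$.

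The central step is to characterise eventual non-negativity of each such $s_n$ as a convex semi-algebraic condition on $\boldsymbol{u}$, and then take the intersection over the rows of $B$ to obtain the desired set $W$. I would group the eigenvalues into classes of equal modulus, ordered by decreasing modulus; the top class whose contribution to $s_n$ is not identically zero governs the eventual sign. If it consists of a single real eigenvalue, eventual non-negativity reduces to a sign inequality on the corresponding $\alpha_i(\boldsymbol{u})$; if it consists of several equal-modulus eigenvalues (including complex-conjugate pairs), the condition becomes the requirement that the associated trigonometric polynomial in $n\theta_1,\dots,n\theta_m$ be non-negative on the closure of the orbit $\{(n\theta_1,\dots,n\theta_m) \bmod 2\pi\}$, which by Weyl equidistribution is a subtorus cut out by the multiplicative relations among the $\lambda_i/|\lambda_i|$. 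Each such condition is convex in the $\alpha_i$ and hence in $\boldsymbol{u}$, so $W$ is convex semi-algebraic, and $\mathsf{P1}$ fails to terminate on $\mathbb{Z}^d$ iff $W \cap \mathbb{Z}^d \neq \emptyset$.

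The main obstacle is justifying that this dominant-term analysis actually detects eventual non-negativity: a priori the subdominant classes could cancel the dominant contribution at infinitely many $n$, and when real and complex eigenvalues coexist at the top modulus the interplay between a constant contribution and a genuinely oscillating one is delicate. Ruling this out requires Diophantine lower bounds on exponential sums, provided qualitatively by the $S$-units theorem of Evertse, van der Poorten, and Schlickewei, together with the Skolem--Mahler--Lech theorem and generalisations of Roth's theorem to the $p$-adic and higher-dimensional settings; crucially only their ineffective statements are used, since we have already reduced to eventual non-termination. For the complexity claim, the defining formula for $W$ has coefficients in the splitting field of the characteristic polynomial of $A$, and its size is governed by the number of distinct equal-modulus classes and the multiplicative relations among the $\lambda_i/|\lambda_i|$; combining with the complexity bounds for \cite{KhachiyanP97} yields exponential space in general, while for $d \leq 4$ only a bounded combinatorial pattern of real and complex eigenvalues with equal modulus can arise, keeping the formula polynomially bounded and the whole procedure in polynomial space.
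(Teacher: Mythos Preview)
Your overall architecture is the paper's: pass to eventual non-termination, decompose by eigenvalue modulus, characterise non-negativity of the dominant class via density of the orbit in a subgroup of the torus cut out by multiplicative relations, invoke the $S$-units theorem to show the dominant class really governs, assemble a convex semi-algebraic witness set, and apply Khachiyan--Porkolab. But several steps that the paper carries out explicitly are missing from your sketch, and without them the argument does not close.

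The two most serious gaps are these. First, you never reduce to the \emph{non-degenerate} case (no quotient of distinct eigenvalues is a root of unity). Without that reduction, proper subsums of the dominant exponential sum can vanish along full arithmetic progressions, so the hypothesis of the $S$-units theorem fails and you obtain no lower bound on the dominant contribution; the paper fixes this by replacing $A$ with $A^L$ for the lcm $L$ of the orders of all root-of-unity quotients and splitting into $L$ residue classes. Second, you give no way to \emph{compute} the group of multiplicative relations among the $\lambda_i/|\lambda_i|$, hence no effective description of the orbit closure and no semi-algebraic formula for $W$: the paper uses Masser's theorem to bound a basis for this lattice, and that bound is also what drives the exponential-space complexity claim. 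Two further points: your convexity argument (``each such condition is convex, so $W$ is convex'') ignores that $W$ is a lexicographic union over which modulus class is the first nonzero one---the paper proves convexity of $W$ separately, using that each $\mathit{ZERO}_i$ is a linear subspace contained in $\mathit{ENT}_i$; and your homogenisation via a fixed point of $x\mapsto Ax+a$ breaks down when $1$ is an eigenvalue of $A$ (no fixed point need exist, and your ``constant drift'' is then a genuine $n$-linear term, not an exponential one), whereas the paper's device of appending a coordinate handles this uniformly at the cost of a single possibly-repeated real eigenvalue $1$---which is precisely why Proposition~\ref{prop:main} assumes only that the \emph{complex} eigenvalues are simple rather than full diagonalisability of the homogenised matrix.
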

In this section we give a high-level overview of the proof of Theorem~\ref{thm:main}.

Let $f:\mathbb{R}^d\rightarrow \mathbb{R}^d$ be the affine function
$f(\boldsymbol{x})=A\boldsymbol{x}+\boldsymbol a$ computed by the body
of the while loop in \textsf{P1} and $P=\{ \boldsymbol{x} \in
\mathbb{R}^d: B \boldsymbol{x}\geq \boldsymbol{c}\}$ the convex polytope
corresponding to the loop guard.  We define the set of
\emph{non-terminating points} to be
\[ \mathit{NT} = \{ \boldsymbol{u}\in \mathbb{R}^d : \forall n \in
\mathbb{N},\, f^n(\boldsymbol{u}) \in P \} \, .\] Following
Braverman~\cite{Bra06}, we moreover define the set of \emph{eventually
  non-terminating points} to be
\[ \mathit{ENT} = \{ \boldsymbol{u}\in \mathbb{R}^d : \exists n \in 
\mathbb{N},\, f^n(\boldsymbol{u}) \in \mathit{NT} \} \, .\] 
It is easily seen from the above definitions that both \textit{NT} and
\textit{ENT} are convex sets.

By definition, \textsf{P1} is non-terminating on $\mathbb{Z}^d$ if
and only if $\mathit{NT}$ contains an integer point.  It is moreover
clear that $\mathit{NT}$ contains an integer point if and only if
$\mathit{ENT}$ contains an integer point.


Recall that a subset of $\mathbb{R}^d$ is said to be
\emph{semi-algebraic} if it is a Boolean combination of sets of the
form $\{ \boldsymbol{x} \in \mathbb{R}^d : p(\boldsymbol{x})\geq 0\}$,
where $p$ is a polynomial with integer coefficients.  Equivalently the
semi-algebraic sets are those definable by quantifier-free first-order
formulas over the structure $(\mathbb{R},<,+,\cdot,0,1)$.  In fact,
since the first-order theory of the reals admits quantifier
elimination~\cite{Tar51}, the semi-algebraic sets are precisely the
first-order definable sets.

Define $W \subseteq \mathbb{R}^d$ to be a
\emph{non-termination witness set} (or simply a witness set) if it
satisfies the following two properties (where $\mathbb{A}$ denotes the
set of algebraic numbers):
\begin{itemize}
\item[(i)] $W$ is convex and semi-algebraic;
\item[(ii)] $W\cap \mathbb{A}^d=\mathit{ENT}\cap \mathbb{A}^d$.
\end{itemize}

The integer points in a witness set $W$ are precisely the integer
points of \textit{ENT}, and so \textsf{P1} is non-terminating on
$\mathbb{Z}^d$ precisely when $W$ contains an integer point.  Our
approach to solving the termination problem consists in computing a
witness set $W$ for a given program and then using the following
theorem of Khachiyan and Porkolab~\cite{KhachiyanP97} to decide whether $W$
contains an integer point.
\begin{theorem}[Khachiyan and Porkolab]
Let $W\subseteq\mathbb{R}^d$ be a convex semi-algebraic set defined by
polynomials of degree at most $D$ and that can be represented in space
$S$. In that case, if $W\cap\mathbb{Z}^d\neq\emptyset$, then $W$ must
contain an integral point that can be represented in space
$SD^{O(d^4)}$.
\end{theorem}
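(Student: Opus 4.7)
The plan is to establish a bit-complexity bound on the smallest integer point of $W$ by a recursive dimension-reduction argument based on Khinchin's flatness theorem. Recall that Khinchin's theorem states that a convex body $K \subseteq \mathbb{R}^d$ either contains an integer point or else has lattice width at most $\omega(d)$ in some integer direction $\boldsymbol{v} \in \mathbb{Z}^d$, where $\omega(d)$ is an explicit function of $d$ alone. Consequently, if the lattice width of $W$ in every integer direction exceeds $\omega(d)$, then $W$ already contains an integer point derivable directly from its width bounds, and this point has coordinates of bit-size polynomial in $S$ and $d$; this is the base of the argument.

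First, I would use quantifier elimination over the reals to rewrite $W$ in a canonical semi-algebraic form, deriving a priori bounds on the coordinates of any extreme point of $W$ (handling the unbounded case by first projecting out recession directions). Next, using algorithms for lattice basis reduction applied to an appropriate inner ellipsoidal approximation of $W$, I would search for an integer vector $\boldsymbol{v}$ along which $W$ is thin in the sense of Khinchin, in time polynomial in $S$ and in the degree $D$. If no such $\boldsymbol{v}$ exists, we output an integer point directly; otherwise we enter the recursion.

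In the recursive step, for each of the at most $\omega(d)+1$ integer hyperplanes $\boldsymbol{v}^T\boldsymbol{x} = k$ meeting $W$, I would recurse on the slice $W \cap \{\boldsymbol{v}^T\boldsymbol{x}=k\}$, which is again convex and semi-algebraic but of dimension $d-1$. The main obstacle is to control precisely how the representation size and the polynomial degrees of the defining inequalities grow across these $d$ recursive levels: substitution via $\boldsymbol{v}^T\boldsymbol{x}=k$ preserves the degree bound $D$ but enlarges the coefficients by a factor depending on the bit-size of $\boldsymbol{v}$, and the magnitude of $\boldsymbol{v}$ is itself constrained only by geometric parameters of the current $W$. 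I would therefore need a Cauchy-type bound showing that at each recursive level the thin integer direction can be taken with coordinates of bit-size $SD^{O(d^3)}$, so that compounding the increments over $d$ levels yields the cumulative bound $SD^{O(d^4)}$ claimed in the statement. Finally, the base case $d=1$ reduces to detecting integer points in a union of intervals defined by univariate polynomials of degree at most $D$, which is elementary and contributes only polynomially. Combining these ingredients produces an integer point of $W$ representable in the claimed space.
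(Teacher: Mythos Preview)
The paper does not prove this theorem at all: it is quoted as a result of Khachiyan and Porkolab, cited from~\cite{KhachiyanP97}, and used as a black box to decide whether the computed witness set contains an integer point. There is therefore no proof in the paper against which to compare your proposal.

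For what it is worth, your outline is in the spirit of the actual Khachiyan--Porkolab argument, which does proceed by a Lenstra-style recursion based on the flatness theorem (Khinchine's theorem) combined with lattice basis reduction. The substantive technical work in their paper, which your sketch elides, lies in handling convex bodies given by \emph{semi-algebraic} rather than linear descriptions: one must control the complexity of computing width directions, ellipsoidal roundings, and the representation of the recursive slices when the boundary is curved, all via effective quantifier elimination. Your claim that ``substitution via $\boldsymbol{v}^T\boldsymbol{x}=k$ preserves the degree bound $D$'' is correct, but the bound on the bit-size of the thin direction $\boldsymbol{v}$ is where the $D^{O(d^3)}$ per level actually arises, and justifying that bound in the semi-algebraic setting is the heart of the matter rather than a routine Cauchy estimate. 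If you intend to flesh this out, that is the step requiring real care; otherwise, simply citing~\cite{KhachiyanP97} as the paper does is the appropriate course.
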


Our approach does not attempt to characterise the set \textit{ENT}
directly, but rather uses the witness set $W$ as a proxy. However, our
techniques do allow us to establish that
$\overline{\mathit{ENT}}=\overline{W}$, which in particular implies
that $\overline{\mathit{ENT}}$ is semi-algebraic, since the closure of
a semi-algebraic set is semi-algebraic (see the Appendix for
details). A natural question is whether the set $ENT$
itself is semi-algebraic, which we leave as an open problem.

We next describe some restrictions on linear loops that can
be made without loss of generality and that will ease our upcoming
analysis.

We first reduce the problem of computing witness sets in the general
case to the same problem in the homogeneous case.  Note that Program
\textsf{P1} terminates on a given initial value $\boldsymbol{u}\in
\mathbb{Z}^d$ if and only if the homogeneous program \textsf{P3} below
terminates for the same value of $\boldsymbol{u}$:
\begin{align*}
\mathsf{P3:} \boldsymbol{x}\gets 
\begin{pmatrix} \boldsymbol{u}\\ 1\end{pmatrix} \mbox{\textit{while}} \begin{pmatrix}B &-\boldsymbol{c} \end{pmatrix}
\boldsymbol{x} \geq 0 \mbox{ \textit{do} } \boldsymbol{x}\leftarrow 
\begin{pmatrix}A&\boldsymbol{a}\\ 0 & 1
\end{pmatrix} \boldsymbol{x}
\end{align*}

Note that if $A$ is diagonalisable then all 
eigenvalues of
$\begin{pmatrix}A&\boldsymbol{a}\\ 0 & 1
\end{pmatrix}$ are simple, with the possible exception of the
eigenvalue $1$.  (Recall that an eigenvalue is said to be simple if it
has multiplicity one as a root of the minimal polynomial of $A$.)  Now
if $W$ is a witness set for program \textsf{P3} then $\left\{
  \boldsymbol{u} \in \mathbb{R}^d : \begin{pmatrix} \boldsymbol{u}\\
    1\end{pmatrix} \in W \right\}$ is a witness set for \textsf{P1}.
We conclude that, in order to settle the inhomogeneous case with a
diagonalisable matrix, it suffices to compute a witness set in the
case of a homogeneous linear loop \textsf{P2} in which the only
repeated eigenvalues of the new matrix $A$ are positive and real.
Likewise, to handle the inhomogeneous case for matrices of dimension
at most $d$, it suffices to be able to compute witness sets in the
homogeneous case for matrices of dimension at most $d+1$.%
\footnote{Note that whilst Braverman~\cite{Bra06} shows how to decide
  termination over the integers for homogeneous programs with
  arbitrary update matrices, he does \emph{not} compute a witness set
  for such programs---indeed this remains an open problem since it
  would enable one to solve termination over the integers for
  arbitrary inhomogeneous programs.}

We can further simplify the homogeneous case by restricting to loop
guards that comprise a single linear inequality.  To see this, first
note that program \textsf{P2} above is eventually non-terminating on
$\boldsymbol{u}$ if and only if for each row $\boldsymbol{b}^T$ of $B$
program \textsf{P4} below is eventually non-terminating on
$\boldsymbol{u}$:
\begin{gather*}
\mathsf{P4:}\ \mbox{$\boldsymbol{x}\gets \boldsymbol{u}$ ; 
\textit{while} $\boldsymbol{b}^T\boldsymbol{x} \geq 0$ \textit{do} $\boldsymbol{x}\leftarrow A\boldsymbol{x}$.}
\end{gather*}
Noting that the finite intersection of convex semi-algebraic sets is
again convex and semi-algebraic, we can compute a witness set
for \textsf{P2} as the intersection of witness sets for each version
of \textsf{P4}.

The final simplification concerns the notion of
non-degeneracy.  We say that matrix $A$ is \emph{degenerate} if it has
distinct eigenvalues $\lambda_1 \neq \lambda_2$ whose quotient
$\lambda_1/\lambda_2$ is a root of unity.  

Given an arbitrary matrix $A$, let $L$ be the least common multiple of all orders of quotients of distinct eigenvalues of $A$ which are roots
of unity. It is known that $L=2^{O(d\sqrt{\log d})}$ \cite{BOOK}.  The
eigenvalues of the matrix $A^L$ have the form $\lambda^L$ for
$\lambda$ an eigenvalue of $A$, by the spectral mapping theorem.  It follows that
$A^L$ is non-degenerate, since if $\lambda_1,\lambda_2$ are
eigenvalues of $A$ such that $\lambda^L_1/\lambda^L_2$ is a root of
unity then $\lambda_1/\lambda_2$ is a root of unity and hence
$\lambda^L_1/\lambda^L_2=1$. Note that all eigenvectors of $A$ are still eigenvectors of $A^L$, thus $A^L$ will be diagonalisable whenever $A$ is.

Now program \textsf{P4} is eventually non-terminating on
$\boldsymbol{u}\in \mathbb{Z}^d$ if and only if program \textsf{P5}
below is eventually non-terminating on the set
$\{\boldsymbol{u},A\boldsymbol{u}, \ldots,A^{L-1}\boldsymbol{u}\}$:
\begin{gather*}
\mathsf{P5:}\ \mbox{$\boldsymbol{x}\gets \boldsymbol{v}$ ; 
\textit{while} $\boldsymbol b^T\boldsymbol{x} \geq 0$ \textit{do} $\boldsymbol{x}\leftarrow A^L\boldsymbol{x}$.}
\end{gather*}
Thus if $W$ is a witness set for \textsf{P5} then $\bigcap_{i=0}^{L-1}
\{ \boldsymbol{u} \in \mathbb{Z}^d : A^i\boldsymbol{u} \in W\}$ is a witness set for
\textsf{P4}.

The main technical result of the paper is the following proposition:
\begin{proposition}
Given a homogeneous simple linear loop 
\begin{gather*}
\mathsf{P4:}\ \mbox{$\boldsymbol{x}\gets \boldsymbol{u}$ ; 
\textit{while} $\boldsymbol{b}^T\boldsymbol{x} \geq 0$ \textit{do} $\boldsymbol{x}\leftarrow A\boldsymbol{x}$,}
\end{gather*}
such that $A$ is non-degenerate and either $A$ has dimension at most
$5$ or all complex eigenvalues of $A$ are simple,  we can compute a
witness set for \textsf{P4} using exponential space if $A$ is diagonalisable and using polynomial space if $A$ has dimension at most $5$.
\label{prop:main}
\end{proposition}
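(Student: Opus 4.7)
The plan is to analyse the real sequence $s_n = \boldsymbol{b}^T A^n \boldsymbol{u}$ as an exponential polynomial in $n$, use the eigenstructure of $A$ to characterise when $s_n \ge 0$ for all sufficiently large $n$, and then package the resulting conditions on $\boldsymbol{u}$ as a convex semi-algebraic set $W$. After diagonalising $A$ over $\mathbb{C}$ (allowing small polynomial-in-$n$ corrections for the positive-real Jordan blocks permitted by the reduction preceding the proposition), one writes
\[ s_n \;=\; \sum_{j} p_j(n)\,\gamma_j(\boldsymbol{u})\,\lambda_j^{n}, \]
where each $\gamma_j$ is a linear form in $\boldsymbol{u}$ with algebraic coefficients, and the $p_j$ are constant for genuinely complex $\lambda_j$. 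Grouping eigenvalues by common modulus and exploiting non-degeneracy gives a finite, strictly decreasing hierarchy of dominant ``layers'' whose contributions to $s_n$ can be peeled off one at a time.

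The bulk of the work is a layer-by-layer analysis of what is required for $s_n \ge 0$ eventually. For a layer consisting of a single positive real eigenvalue $\rho$, eventual non-negativity demands $\gamma(\boldsymbol{u}) > 0$, or else $\gamma(\boldsymbol{u}) = 0$ and we pass to the next layer. For a layer containing a conjugate pair $\rho e^{\pm i\theta}$ (possibly together with $\pm\rho$), the normalised contribution is a real trigonometric polynomial in $n$; by non-degeneracy the frequencies involved are incommensurate modulo $2\pi\mathbb{Q}$, so Weyl equidistribution forces this trigonometric polynomial to dip below zero on a set of positive density unless every complex coefficient $\gamma_j(\boldsymbol{u})$ in the layer vanishes. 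This produces, for each candidate ``vanishing pattern'' of the $\gamma_j$, a system of linear equalities (for layers to be suppressed) and a single strict polynomial inequality (for the first surviving layer to be positive). Assembling these yields a Boolean combination of semi-algebraic cells that equals $\mathit{ENT}\cap\mathbb{A}^d$ on algebraic points; taking its closure in a suitably chosen canonical form produces a convex semi-algebraic witness set $W$, with convexity inherited from the known convexity of $\mathit{ENT}$.

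The main obstacle is the mixed case in which the dominant layer contains both real and complex eigenvalues: one must rule out that a complex contribution $\gamma_j(\boldsymbol{u})$ is merely ``very small but non-zero'', letting the correction from lower layers mask the oscillation. Resolving this requires an ineffective lower bound of $S$-unit type on $|\gamma_j(\boldsymbol{u})|$ for algebraic $\boldsymbol{u}$; the Evertse--van der Poorten--Schlickewei theorem cited earlier supplies exactly such a bound, reducing the a priori transcendental statement ``$s_n < 0$ infinitely often'' to the decidable equation $\gamma_j(\boldsymbol{u}) = 0$ that defines the vanishing patterns above. The hypothesis that all complex eigenvalues of $A$ are simple (or that $d\le 5$) is precisely what keeps us in the regime where these tools apply, since the polynomial factors $p_j(n)$ remain trivial on the complex side.

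Finally, the space bound is read off from the representation: the algebraic numbers entering $W$ (eigenvalues, eigenvector entries, their Galois conjugates) can be encoded via their minimal polynomials, yielding polynomial-size semi-algebraic formulas when $d\le 5$ and exponential-size ones in general due to the combinatorial explosion of vanishing patterns and of Galois orbits. Combining $W$ with the Khachiyan--Porkolab bound stated earlier then delivers the claimed \textbf{PSPACE} and \textbf{EXPSPACE} complexities.
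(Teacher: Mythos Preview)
Your layer-by-layer outline and the place you assign to the $S$-units theorem are broadly in line with the paper, but the analysis of a single layer contains a genuine error that breaks the construction of $W$. You assert that for a layer containing complex eigenvalues (possibly together with the real $\rho$), Weyl equidistribution forces the normalised contribution to dip below zero unless \emph{every} complex coefficient $\gamma_j(\boldsymbol{u})$ vanishes. This is false: if the layer is $\{\rho,\lambda,\overline{\lambda}\}$ with $|\lambda|=\rho$, the normalised contribution is $\alpha_0 + \beta z + \overline{\beta z}$ with $z=(\lambda/\rho)^n$, and this is non-negative for all $z\in\mathbb{T}$ exactly when $\alpha_0 \ge 2|\beta|$, which certainly allows $\beta\neq 0$. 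With several complex pairs the correct condition is that a linear form $f(\boldsymbol{z})^T\boldsymbol{u}$ be non-negative on the \emph{orbit closure} of $\boldsymbol{\mu}^n=((\lambda_1/\rho)^n,\ldots,(\lambda_s/\rho)^n)$ in $\mathbb{T}^s$; this is a genuinely non-linear semi-algebraic constraint on $\boldsymbol{u}$, not a system of linear equations. Your ``vanishing-pattern'' cells are therefore too small, and the resulting $W$ would omit many algebraic points of $\mathit{ENT}$.

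A second, related gap: non-degeneracy only says that no ratio $\lambda_i/\lambda_j$ is a root of unity; it does \emph{not} make the arguments of the $\lambda_j/\rho$ rationally independent, so the orbit $\{\boldsymbol{\mu}^n:n\in\mathbb{N}\}$ is in general dense only in a proper algebraic subtorus $T(\boldsymbol{\mu})\subsetneq\mathbb{T}^s$ determined by the full group of multiplicative relations among the eigenvalues. The paper computes a basis for this group using Masser's bound, writes $T(\boldsymbol{\mu})$ semi-algebraically, and then eliminates the quantifier in ``$\forall\boldsymbol{z}\in T(\boldsymbol{\mu}),\,f(\boldsymbol{z})^T\boldsymbol{u}\ge 0$'' via Tarski--Seidenberg; this, together with the non-degeneracy reduction, is where the exponential cost actually arises---not from a combinatorial explosion of vanishing patterns or Galois orbits. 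Two smaller points: the $S$-units bound is applied to the whole dominant-layer sum $\boldsymbol{b}^T A^n\boldsymbol{u}_i$ as a function of $n$ (showing it is $\Omega(\rho_i^n\Lambda^{-n\varepsilon})$ so that lower layers cannot mask it), not to a fixed coefficient $|\gamma_j(\boldsymbol{u})|$; and convexity of $W$ cannot simply be inherited from convexity of $\mathit{ENT}$, since the two sets agree only on algebraic points---the paper proves convexity of $W$ directly from its layered definition.
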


Bearing in mind that the transformation from $\mathsf{P1}$ to
$\mathsf{P4}$ increases the dimension of $A$ by one and does not
introduce repeated complex eigenvalues, it follows from
Proposition~\ref{prop:main} that we can also compute witness sets for
simple linear loops of the form \textsf{P1}
under the assumptions of Theorem~\ref{thm:main}, and thus we obtain
the decidability part of Theorem~\ref{thm:main}.  The
exponential-space bound in Theorem~\ref{thm:main} is obtained by
bounding the representation of the witness set in
Proposition~\ref{prop:main} (see Section~\ref{sec:complexity}).

In the rest of this section we give a brief summary of the proof of
Proposition~\ref{prop:main}.  

To compute a witness set $W$ for \textsf{P4} we first partition the
eigenvalues of the update matrix $A$ by grouping eigenvalues of equal
modulus.  Correspondingly we write $\mathbb{R}^d$ as a direct sum
$\mathbb{R}^d = V_1 \oplus \ldots \oplus V_m$, where each subspace
$V_i$ is the sum of (generalised) eigenspaces of $A$ associated to
eigenvalues of the same modulus.  Assume that $V_1$ corresponds to the
eigenvalues of maximum modulus, $V_2$ the next greatest modulus,
etc.  Then there are two main steps in the construction of $W$:
\begin{enumerate}
\item 
 By analysing multiplicative relationships among eigenvalues of the
 same modulus, we show that for each subspace $V_i$ the set
 $\mathit{ENT}\cap V_i$ of eventually non-terminating initial values
 in $V_i$ is semi-algebraic.
\item
Given $\boldsymbol{v} \in \mathbb{R}^d$, we can write
$\boldsymbol{v}=\boldsymbol{v}_1+\ldots+\boldsymbol{v}_m$, with
$\boldsymbol{v}_i \in V_i$.  Using Theorem~\ref{thm:s-units} on
$S$-units, we show that if all entries of $\boldsymbol{v}$ are
algebraic numbers then the eventual non-termination of \textsf{P4} on
$\boldsymbol{v}$ is a function of its eventual non-termination on each
$\boldsymbol{v}_i$ separately.  More precisely we look for the first
$\boldsymbol{v}_i$ such that the sequence $\langle
\boldsymbol{b}^TA^n\boldsymbol{v}_i : n \in \mathbb{N}\rangle$ is infinitely often non-zero. Then \textsf{P4} is eventually non-terminating on
$\boldsymbol{v}$ if and only if it is eventually non-terminating on
$\boldsymbol{v}_i$.
\end{enumerate}

The computability of a witness set $W$ easily follows from items 1 and
2 above. Our techniques require that the update matrix in the
original linear loop \textsf{P1} either be diagonalisable or have
dimension at most $4$.  Eliminating these restrictions seems to
require solving the Ultimate Positivity Problem for linear recurrence
sequences of order greater than $5$, which in turn requires solving
hard open problems in the theory of Diophantine approximation
\cite{OW14:SODA}.

\section{Groups of Multiplicative Relations}
\label{sec:mult}
This section introduces some concepts concerning groups of
multiplicative relations among algebraic numbers.  Here we will assume
some basic notions from algebraic number theory and the first-order
theory of reals.  We assume also a natural first-order interpretation of
the field of complex numbers in the ordered field of real numbers (in
which each complex number is encoded as a pair comprising its real and
imaginary parts).  Under this interpretation we refer to sets of
complex numbers as being semi-algebraic and first-order definable.
Details of the relevant notions can be found in the Appendix.

Let $\mathbb{T}=\lbrace z\in\mathbb{C}: \lvert z\rvert =1\rbrace$.  We
define the \emph{$s$-dimensional torus} to be $\mathbb{T}^s$,
considered as a group under componentwise multiplication.

Given a tuple of algebraic numbers
$\boldsymbol\lambda=(\lambda_1,\ldots,\lambda_s)$, in this section we
consider how to effectively represent the \emph{orbit} $\{
\boldsymbol\lambda^n : n \in \mathbb{N}\}$.  More
precisely, we will give an algebraic representation of the topological
closure of that orbit in $\mathbb{T}^s$.

The \emph{group of multiplicative relations} of
$\boldsymbol{\lambda}$, which is an additive subgroup of
$\mathbb{Z}^s$, is defined as
\begin{equation*}
L(\boldsymbol{\lambda})=\lbrace \boldsymbol{v}\in \mathbb{Z}^s : \boldsymbol\lambda^{\boldsymbol v}=1 \rbrace \, ,
\end{equation*}
where $\boldsymbol\lambda^{\boldsymbol v}$ is defined to be
$\lambda_1^{v_1}\cdots\lambda_s^{v_s}$ for $\boldsymbol{v}\in \mathbb{Z}^s$, that is, exponentiation acts
coordinatewise.

Since $\mathbb{Z}^s$ is a free abelian group, its subgroups are also
free.  In particular, $L(\boldsymbol\lambda)$ has a finite basis. The
following powerful theorem of Masser~\cite{Mas88} gives bounds on the
magnitude of the components of such a basis.

\begin{theorem}[Masser]
\label{masser}
The free abelian group $L(\boldsymbol{\lambda})$ has a basis $\boldsymbol{v}_1,\cdots,\boldsymbol{v}_l\in\mathbb{Z}^s$ for which
\[ \max\limits_{1\leq i\leq l,1\leq j\leq s} \lvert v_{i,j} \rvert \leq (D\log H)^{O(s^2)} \]
where $H$ and $D$ bound respectively the heights and degrees of all the $\lambda_i$.
\end{theorem}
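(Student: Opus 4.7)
The group $L(\boldsymbol\lambda)$ is a sublattice of $\mathbb{Z}^s$ of some rank $l\leq s$. By the geometry of numbers (Minkowski's second theorem combined with an LLL-style reduction of an arbitrary integer basis), any such lattice admits a basis whose $\ell^\infty$ norms are bounded, up to a factor depending only on $s$, by its largest successive minimum $\mu_l$. The entire task therefore reduces to establishing an effective upper bound of the form $\mu_l\leq(D\log H)^{O(s^2)}$.

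To bound $\mu_l$ I would set up a counting/pigeonhole argument passing through logarithms. Fix a determination $\log\lambda_i$ of each logarithm and note that $\boldsymbol\lambda^{\boldsymbol v}=1$ exactly when $\sum_i v_i\log\lambda_i\in 2\pi i\mathbb{Z}$, so $L(\boldsymbol\lambda)$ is the integer kernel of a single linear form in $\log\lambda_1,\ldots,\log\lambda_s,2\pi i$. On the height side, multiplicativity of the absolute Weil height gives $h(\phi(\boldsymbol v))\leq sT\log H$ whenever $\|\boldsymbol v\|_\infty\leq T$, where $\phi(\boldsymbol v):=\boldsymbol\lambda^{\boldsymbol v}$ lies in the number field $K=\mathbb{Q}(\boldsymbol\lambda)$ of degree at most $D^s$; an effective form of Northcott's theorem then bounds $|\phi([-T,T]^s\cap\mathbb{Z}^s)|$ in terms of $T$, $D$, and $H$. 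On the lattice side, if $T<\mu_l$ then $L(\boldsymbol\lambda)\cap[-T,T]^s$ is contained in a rank-$(l-1)$ sublattice, so the box meets at least $\Omega(T^{s-l+1})$ distinct cosets of $L(\boldsymbol\lambda)$, each of which must give a distinct image under $\phi$. Balancing these two counts pins $\mu_l$ to the desired magnitude.

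The hard part is the meticulous tracking of constants needed to land the final exponent at exactly $O(s^2)$: a crude Northcott estimate in $K$ gives only a very weak bound, and one needs the Baker-Wüstholz-style effective lower bounds on linear forms in logarithms to certify that distinct cosets of $L(\boldsymbol\lambda)$ in the box produce numerically well-separated values of $\phi$, so that the height counting can be turned into a genuine pigeonhole principle. A secondary, more routine issue is controlling the torsion subgroup of $K^\times$ (roots of unity in $K$), which is handled by quotienting out at the cost of an extra factor polynomial in $[K:\mathbb{Q}]$. Overall, the deep transcendence input is Baker-Wüstholz, while the geometry-of-numbers and height-theoretic steps package those lower bounds into the explicit $(D\log H)^{O(s^2)}$ shape.
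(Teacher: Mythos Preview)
The paper does not prove this theorem at all: it is quoted as a result of Masser with the citation~\cite{Mas88}, and is used as a black box to bound the search space for a basis of $L(\boldsymbol\lambda)$. There is therefore no ``paper's own proof'' to compare against.

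Taken on its own merits, your outline points at the right ingredients (geometry of numbers to pass from successive minima to a short basis, plus a deep transcendence input of Baker--W\"ustholz type), and this is indeed the flavour of Masser's argument. However, the specific mechanism you sketch has a real gap. Your pigeonhole compares the number of cosets of $L(\boldsymbol\lambda)$ hit by the box $[-T,T]^s$ against a Northcott count in $K=\mathbb{Q}(\boldsymbol\lambda)$. But $[K:\mathbb{Q}]$ can be as large as $D^s$, so the Northcott bound on points of height at most $sT\log H$ is of order $(sT\log H)^{C\,D^s}$, while the box supplies at most $(2T+1)^s$ cosets in total; the inequality you need simply never kicks in, and no choice of $T$ produces a contradiction. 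You acknowledge this and say Baker--W\"ustholz repairs the exponent, but you do not explain \emph{how}: the step from ``distinct cosets give numerically well-separated values of $\phi$'' to a count that beats $(2T+1)^s$ is missing, and that is precisely where the work in Masser's paper lies. A second loose end is the coset-count lower bound ``$\Omega(T^{s-l+1})$ if $T<\mu_l$'': this does not follow just from $T<\mu_l$, since $\mu_1,\ldots,\mu_{l-1}$ could be tiny and a single coset of $L(\boldsymbol\lambda)$ could already saturate much of the box. In short, the high-level plan is reasonable, but as written it is a plan rather than a proof; for the purposes of the present paper it is entirely appropriate to cite Masser and move on.
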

Membership of a tuple $\boldsymbol{v}\in \mathbb{Z}^s$ in
$L(\boldsymbol{\lambda})$ can be computed in polynomial space, using a
decision procedure for the existential theory of the reals.  In
combination with Theorem~\ref{masser}, it follows that we can compute
a basis for $L(\boldsymbol{\lambda})$ in polynomial space by
brute-force search.

Corresponding to $L(\boldsymbol{\lambda})$, we consider the following
multiplicative subgroup of $\mathbb{T}^s$:
\begin{equation*}
T(\boldsymbol{\lambda})=\lbrace \boldsymbol \mu\in\mathbb{T}^s : \forall \boldsymbol v\in L(\boldsymbol\lambda),\,\boldsymbol\mu^{\boldsymbol v}=1\rbrace \, .
\end{equation*}
If $V$ is a basis of $L(\boldsymbol{\lambda})$ then we can
equivalently characterise $T(\boldsymbol{\lambda})$ as $\{
\boldsymbol{\mu} \in \mathbb{T}^s: \forall \boldsymbol{v}\in
V,\,\,\boldsymbol\mu^{\boldsymbol v}=1\}$.  Crucially, this finitary
characterisation allows us to represent $T(\boldsymbol\lambda)$ as a
semi-algebraic set.

We will use the following classical lemma of Kronecker on simultaneous
Diophantine approximation, in order to show that the orbit $\lbrace
\boldsymbol\lambda^n : n\in\mathbb{N} \rbrace$ is a dense subset of
$T(\boldsymbol{\lambda})$.

\begin{lemma}
  Let $\boldsymbol \theta,\boldsymbol \psi\in\mathbb{R}^s$. Suppose that for all $\boldsymbol v\in\mathbb{Z}^s$, if
  $\boldsymbol v^T\boldsymbol \theta\in\mathbb{Z}$ then also
  $\boldsymbol v^T\boldsymbol\psi\in\mathbb{Z}$, i.e., all integer
  relations among the coordinates of $\boldsymbol \theta$ also hold
  among those of $\boldsymbol\psi$ (modulo $\mathbb{Z}$). Then, for
  each $\varepsilon>0$, there exist $\boldsymbol p\in\mathbb{Z}^s$ and
  a non-negative integer $n$ such that
\[ \| n\boldsymbol\theta - \boldsymbol p - \boldsymbol\psi \|_\infty \leq\varepsilon \, .\]
\end{lemma}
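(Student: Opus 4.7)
The plan is to reformulate the lemma on the compact abelian group $\mathbb{T}^s=\mathbb{R}^s/\mathbb{Z}^s$ and conclude via Pontryagin duality. Let $\pi\colon\mathbb{R}^s\to\mathbb{T}^s$ denote the quotient map and set $H=\overline{\{\pi(n\boldsymbol\theta):n\in\mathbb{N}\}}$. The statement of the lemma is equivalent to $\pi(\boldsymbol\psi)\in H$, since by definition of $H$ this gives, for every $\varepsilon>0$, some $n\in\mathbb{N}$ and $\boldsymbol p\in\mathbb{Z}^s$ with $\|n\boldsymbol\theta-\boldsymbol p-\boldsymbol\psi\|_\infty\leq\varepsilon$.

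First I would check that $H$ is a closed \emph{subgroup} of $\mathbb{T}^s$ rather than merely a closed sub-semigroup: by compactness of $\mathbb{T}^s$ the orbit $\{\pi(n\boldsymbol\theta):n\in\mathbb{N}\}$ admits a convergent subsequence, and a standard differencing argument (subtracting adjacent terms along this subsequence, then subtracting one further copy of $\pi(\boldsymbol\theta)$) produces a sequence of positive iterates converging to $-\pi(\boldsymbol\theta)$, so $H$ is closed under inverses; closure under addition follows from the semigroup structure of the orbit together with continuity of addition.

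Next I would identify the annihilator
\[
H^\perp=\{\boldsymbol v\in\mathbb{Z}^s: \chi_{\boldsymbol v}(\boldsymbol y)=1\text{ for all }\boldsymbol y\in H\},
\]
where $\chi_{\boldsymbol v}(\boldsymbol y)=e^{2\pi i\boldsymbol v^T\boldsymbol y}$ runs over the characters of $\mathbb{T}^s$. By continuity and density, $\chi_{\boldsymbol v}|_H\equiv 1$ holds precisely when $\chi_{\boldsymbol v}(\pi(\boldsymbol\theta))=1$, i.e.\ when $\boldsymbol v^T\boldsymbol\theta\in\mathbb{Z}$. The hypothesis of the lemma now reads exactly: every $\boldsymbol v\in H^\perp$ also satisfies $\chi_{\boldsymbol v}(\pi(\boldsymbol\psi))=1$, i.e.\ $\pi(\boldsymbol\psi)\in(H^\perp)^\perp$. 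Invoking Pontryagin duality for closed subgroups of compact abelian groups yields $H=(H^\perp)^\perp$, and therefore $\pi(\boldsymbol\psi)\in H$, finishing the proof.

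The main obstacle is the duality step $H=(H^\perp)^\perp$, which requires that $\mathbb{T}^s$ carry enough characters to separate any point outside a closed subgroup from that subgroup. A self-contained route avoiding general Pontryagin duality would combine the classical structure theorem for closed subgroups of $\mathbb{T}^s$ (each is an internal direct sum of a subtorus and a finite abelian group) with an explicit construction of a separating character, or equivalently apply a Fej\'er-type Fourier argument that approximates the indicator of $H$ by trigonometric polynomials with integer frequencies, thereby producing the needed character annihilating $H$ but not $\pi(\boldsymbol\psi)$.
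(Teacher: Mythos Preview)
The paper does not prove this lemma; it quotes it as ``the following classical lemma of Kronecker on simultaneous Diophantine approximation'' and uses it as a black box to establish the orbit-density theorem that follows. So there is no proof in the paper to compare against.

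Your argument via Pontryagin duality is correct and is one of the standard modern routes to Kronecker's theorem. The only step requiring a moment's care is the verification that $H$ is a closed subgroup rather than merely a closed sub-semigroup: once you have extracted positive integers $m_k$ with $\pi(m_k\boldsymbol\theta)\to 0$, you should note that either $\pi(\boldsymbol\theta)$ is torsion (so $H$ is already a finite cyclic group) or $\pi(\boldsymbol\theta)\ne 0$, whence eventually $m_k\ge 2$ and $\pi((m_k-1)\boldsymbol\theta)\to -\pi(\boldsymbol\theta)$ with $m_k-1\ge 1$, giving $-\pi(\boldsymbol\theta)\in H$. After that, the identification $H^\perp=\{\boldsymbol v\in\mathbb{Z}^s:\boldsymbol v^T\boldsymbol\theta\in\mathbb{Z}\}$ and the double-annihilator identity $(H^\perp)^\perp=H$ for closed subgroups of a compact abelian group finish the job cleanly. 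For context, Kronecker's original argument and many textbook treatments proceed more elementarily, by pigeonhole and induction on $s$ or via Weyl's equidistribution criterion; your final paragraph correctly anticipates that a fully self-contained version of the duality step would need either the structure theorem for closed subgroups of $\mathbb{T}^s$ or a Fej\'er-kernel construction of a separating character.
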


We now arrive at the main result of the section:

\begin{theorem}
\label{dense}
Let $\boldsymbol{\lambda}\in\mathbb{T}^s$. Then the orbit $\lbrace \boldsymbol\lambda^n : n\in\mathbb{N} \rbrace$ is a dense subset of $T(\boldsymbol{\lambda})$.
\end{theorem}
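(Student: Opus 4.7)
The plan is to write the elements of $\mathbb{T}^s$ in exponential form and reduce the density statement to a direct application of the preceding Kronecker lemma. Fix $\boldsymbol{\theta}\in\mathbb{R}^s$ with $\lambda_j=e^{2\pi i\theta_j}$, and for any candidate $\boldsymbol{\mu}\in T(\boldsymbol{\lambda})$ fix $\boldsymbol{\psi}\in\mathbb{R}^s$ with $\mu_j=e^{2\pi i\psi_j}$. Under this parametrisation, $\boldsymbol{\lambda}^{\boldsymbol{v}}=1$ is equivalent to $\boldsymbol{v}^T\boldsymbol{\theta}\in\mathbb{Z}$, and $\boldsymbol{\mu}^{\boldsymbol{v}}=1$ is equivalent to $\boldsymbol{v}^T\boldsymbol{\psi}\in\mathbb{Z}$. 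Hence $L(\boldsymbol{\lambda})=\{\boldsymbol{v}\in\mathbb{Z}^s : \boldsymbol{v}^T\boldsymbol{\theta}\in\mathbb{Z}\}$, and membership of $\boldsymbol{\mu}$ in $T(\boldsymbol{\lambda})$ says exactly that every such $\boldsymbol{v}$ also satisfies $\boldsymbol{v}^T\boldsymbol{\psi}\in\mathbb{Z}$.

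First I would record the easy inclusion $\{\boldsymbol{\lambda}^n : n\in\mathbb{N}\}\subseteq T(\boldsymbol{\lambda})$: for any $\boldsymbol{v}\in L(\boldsymbol{\lambda})$ one has $(\boldsymbol{\lambda}^n)^{\boldsymbol{v}}=(\boldsymbol{\lambda}^{\boldsymbol{v}})^n=1$. Since $T(\boldsymbol{\lambda})$ is closed (it is cut out of $\mathbb{T}^s$ by the finitely many continuous equations $\boldsymbol{\mu}^{\boldsymbol{v}}=1$ for $\boldsymbol{v}$ ranging over a basis of $L(\boldsymbol{\lambda})$), the closure of the orbit is contained in $T(\boldsymbol{\lambda})$.

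For the opposite, non-trivial direction, fix $\boldsymbol{\mu}\in T(\boldsymbol{\lambda})$ and $\varepsilon>0$. By the translation above, the hypothesis of the Kronecker lemma applies verbatim with this $\boldsymbol{\theta}$ and $\boldsymbol{\psi}$: every integer relation on the coordinates of $\boldsymbol{\theta}$ modulo $\mathbb{Z}$ is inherited by $\boldsymbol{\psi}$. Choose $\delta>0$ small enough (depending only on $\varepsilon$) that $\|\boldsymbol{\alpha}-\boldsymbol{\beta}\|_\infty\leq\delta$ in the additive torus $(\mathbb{R}/\mathbb{Z})^s$ implies $\|e^{2\pi i\boldsymbol{\alpha}}-e^{2\pi i\boldsymbol{\beta}}\|_\infty\leq\varepsilon$. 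The lemma produces a non-negative integer $n$ and $\boldsymbol{p}\in\mathbb{Z}^s$ with $\|n\boldsymbol{\theta}-\boldsymbol{p}-\boldsymbol{\psi}\|_\infty\leq\delta$, and exponentiating componentwise yields $\|\boldsymbol{\lambda}^n-\boldsymbol{\mu}\|_\infty\leq\varepsilon$. As $\boldsymbol{\mu}\in T(\boldsymbol{\lambda})$ and $\varepsilon>0$ were arbitrary, this shows $T(\boldsymbol{\lambda})$ lies in the closure of the orbit, completing the proof.

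I do not anticipate a real obstacle here: the content is essentially a dictionary between multiplicative relations among the $\lambda_j$ and additive relations modulo $\mathbb{Z}$ among the $\theta_j$, after which Kronecker's lemma does all the work. The only point requiring care is the uniform-continuity estimate that upgrades approximation of angles to approximation in $\mathbb{T}^s$, which is standard.
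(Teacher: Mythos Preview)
Your proposal is correct and follows essentially the same approach as the paper: write $\lambda_j=e^{2\pi i\theta_j}$, translate multiplicative relations into the additive condition $\boldsymbol{v}^T\boldsymbol{\theta}\in\mathbb{Z}$, and invoke Kronecker's lemma directly. The only minor differences are that you spell out the easy inclusion and closedness of $T(\boldsymbol{\lambda})$ (which the paper leaves implicit) and phrase the final step via uniform continuity, whereas the paper uses the explicit Lipschitz bound $\lvert e^{ix}-e^{iy}\rvert\leq\lvert x-y\rvert$ to get $\|\boldsymbol{\lambda}^n-\boldsymbol{\mu}\|_\infty\leq 2\pi\varepsilon$.
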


\begin{proof}
  Let $\boldsymbol \theta\in\mathbb{R}^s$ be such that
  $\boldsymbol\lambda=e^{2\pi i\boldsymbol\theta}$ (with
  exponentiation operating coordinatewise). Notice that
  $\boldsymbol\lambda^{\boldsymbol v}=1$ if and only if $\boldsymbol v^T
  \boldsymbol\theta\in\mathbb{Z}$. If $\boldsymbol\mu\in
  T(\boldsymbol{\lambda})$, we can likewise define
  $\boldsymbol\psi\in\mathbb{R}^s$ to be such that
  $\boldsymbol\mu=e^{2\pi i \boldsymbol\psi}$. Then the premisses of
  Kronecker's lemma apply to $\boldsymbol \theta$ and $\boldsymbol
  \psi$. Thus, given $\varepsilon>0$, there exist a non-negative
  integer $n$ and $\boldsymbol p\in\mathbb{Z}^s$ such that $\|
  n\boldsymbol \theta -\boldsymbol p-\boldsymbol \psi \|_\infty
  \leq\varepsilon$. Whence
\[ \| \boldsymbol\lambda^n-\boldsymbol\mu \|_\infty = \| e^{2\pi i(n\boldsymbol\theta-\boldsymbol p)}-e^{2\pi i \boldsymbol\psi} \|_\infty \leq \]
\[ \| 2\pi (n\boldsymbol \theta -\boldsymbol p - \boldsymbol \psi) \|_\infty \leq 2\pi\varepsilon \, .\]
\end{proof}

\section{Algorithm for Universal Termination}
Our goal in this section is to prove the following proposition, which
is restated from Section~\ref{sec:overview}.  We have already shown in
Section~\ref{sec:overview} that the main result of this paper,
Theorem~\ref{thm:main}, then follows.

\begin{proposition}
Given a homogeneous simple linear loop 
\begin{gather*}
\mathsf{P4:}\ \mbox{$\boldsymbol{x}\gets \boldsymbol{u}$ ; 
\textit{while} $\boldsymbol{b}^T\boldsymbol{x} \geq 0$ \textit{do} $\boldsymbol{x}\leftarrow A\boldsymbol{x}$,}
\end{gather*}
such that $A$ is non-degenerate, we can compute a
witness set for \textsf{P4} using exponential space if all complex eigenvalues of $A$ are simple and using polynomial space if $A$ has dimension at most $5$.
\label{prop:main2}
\end{proposition}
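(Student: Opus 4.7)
The plan is to follow the two-step outline from Section~\ref{sec:overview}. First I would decompose $\mathbb{R}^d = V_1 \oplus \cdots \oplus V_m$ as a direct sum of $A$-invariant real subspaces, where $V_i$ is the span of the (generalised) eigenspaces of $A$ whose eigenvalues share a common modulus $\rho_i$, with $\rho_1 > \rho_2 > \cdots > \rho_m$; let $\pi_i : \mathbb{R}^d \to V_i$ denote the associated projection. Non-degeneracy forces at most one real eigenvalue at each modulus level (since the ratio of two such real eigenvalues would be $-1$, a forbidden root of unity), so each $V_i$ presents a controlled spectral picture. The strategy is to compute, for each $i$, a convex semi-algebraic set $U_i \subseteq V_i$ that captures eventual non-negativity of $\boldsymbol{b}^T A^n \boldsymbol{v}_i$, and then to amalgamate these sets via the dominant-modulus principle.

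For Step~1, expand
\[ \boldsymbol{b}^T A^n \boldsymbol{v}_i \;=\; \rho_i^n \sum_j c_j(\boldsymbol{v}_i)\, \mu_j^n, \]
where $\mu_j = \lambda_j/\rho_i$ lies on the unit torus and each $c_j(\boldsymbol{v}_i)$ depends linearly on $\boldsymbol{v}_i$. Under either hypothesis of the proposition, the absence of polynomial weights in $n$ is either immediate (simple complex eigenvalues, combined with at most one real eigenvalue per level) or reducible to a bounded case analysis ($d \leq 5$). Letting $\boldsymbol{\mu}$ be the relevant tuple of $\mu_j$, Theorem~\ref{dense} asserts that $\{\boldsymbol{\mu}^n\}$ is dense in the semi-algebraic group $T(\boldsymbol{\mu})$. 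Since the inner sum is a continuous function of $\boldsymbol{\mu}^n$, its eventual non-negativity is captured by the first-order condition
\[ \forall \boldsymbol{\tau} \in T(\boldsymbol{\mu}) \colon \;\; \sum_j c_j(\boldsymbol{v}_i)\, \tau_j \geq 0, \]
and the set $U_i \subseteq V_i$ it defines is an intersection of closed half-spaces in $\boldsymbol{v}_i$ parameterised by $\boldsymbol{\tau}$, hence convex and semi-algebraic.

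For Step~2, let $\boldsymbol{v} = \boldsymbol{v}_1 + \cdots + \boldsymbol{v}_m$ have algebraic entries and let $i^\star$ be the smallest index for which the linear recurrence $\langle \boldsymbol{b}^T A^n \boldsymbol{v}_i \rangle_n$ is not identically zero. By the $S$-units Theorem this sequence has only finitely many zeros, and a standard separation argument then yields a lower bound $|\boldsymbol{b}^T A^n \boldsymbol{v}_{i^\star}| \geq c\, \rho_{i^\star}^n$ on a cofinite set for some $c>0$. Since $\rho_{i^\star} > \rho_j$ for $j > i^\star$, the contribution of $\boldsymbol{v}_{i^\star}$ strictly dominates the tail, so $\boldsymbol{v}$ is eventually non-terminating iff $\boldsymbol{v}_{i^\star} \in U_{i^\star}$. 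Let $K_j = \{\boldsymbol{u} \in V_j : \boldsymbol{b}^T A^n \boldsymbol{u} = 0 \text{ for all } n\}$, a linear subspace contained in $U_j$ and definable by finitely many linear equations via Cayley--Hamilton. The candidate witness set is
\[ W \;=\; \bigcup_{i=1}^{m} \bigl\{ \boldsymbol{v} : \pi_j(\boldsymbol{v}) \in K_j \text{ for } j<i,\; \pi_i(\boldsymbol{v}) \in U_i \bigr\}. \]
Convexity of $W$ follows because, for two points of $W$ with least non-kernel indices $i_1 \leq i_2$, the projections of any convex combination lie in $K_j$ for $j < i_1$ and in $U_{i_1}$ for $j = i_1$, using convexity of $U_{i_1}$ and the containment $K_{i_1} \subseteq U_{i_1}$. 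For complexity, Theorem~\ref{masser} yields a basis for $L(\boldsymbol{\mu})$ of polynomial bit-size, so the defining polynomials of $T(\boldsymbol{\mu})$ have exponential degree in general and polynomial degree when $d \leq 5$, matching the stated space bounds.

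The hard part will be Step~1: passing from the density of $\{\boldsymbol{\mu}^n\}$ in $T(\boldsymbol{\mu})$ to a precise semi-algebraic iff-characterisation of eventual non-negativity on algebraic inputs. Density alone identifies the orbit closure with $T(\boldsymbol{\mu})$, but to conclude eventual non-negativity rather than mere non-negativity on the closure one must rule out the pathological scenario of a sequence that oscillates arbitrarily close to zero without eventually crossing it; this separation is ultimately supplied by the $S$-units Theorem invoked in Step~2. The dimension restriction and the simple-complex-eigenvalue hypothesis are precisely what prevents the appearance of polynomially weighted terms whose analysis would require currently inaccessible Diophantine lower bounds, as discussed in~\cite{OW14:SODA}.
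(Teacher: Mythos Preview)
Your overall architecture matches the paper's, but Step~1 has a genuine gap. The hypothesis ``all complex eigenvalues of $A$ are simple'' does \emph{not} preclude a repeated \emph{real} eigenvalue at some modulus level; indeed the homogenisation in Section~\ref{sec:overview} can introduce a repeated eigenvalue~$1$. Hence your expansion $\boldsymbol{b}^T A^n \boldsymbol{v}_i = \rho_i^n \sum_j c_j(\boldsymbol{v}_i)\,\mu_j^n$ is incomplete: when the positive real eigenvalue $\rho_i$ has index $t>1$ there are additional terms $\boldsymbol{\alpha}_k^T\boldsymbol{v}_i\, n^k\rho_i^n$ for $1\le k\le t-1$, and your torus condition no longer characterises $\mathit{ENT}_i$. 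The paper's Case~III handles this by layering: if $\boldsymbol{\alpha}_j^T\boldsymbol{u}\neq 0$ for some $j\ge 1$, the sign of the highest such coefficient alone decides membership in $\mathit{ENT}_i$ (this gives the half-space set~(\ref{eq:ENT1})); only on the subspace where $\boldsymbol{\alpha}_j^T\boldsymbol{u}=0$ for all $j\ge 1$ does the torus argument apply (set~(\ref{eq:ENT2})). Then $\mathit{ENT}_i$ is the union of these two pieces, and the same case split recurs in the domination argument of Step~2.

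There is also a quantitative slip in Step~2: the $S$-units theorem does \emph{not} deliver $|\boldsymbol{b}^T A^n \boldsymbol{v}_{i^\star}| \geq c\,\rho_{i^\star}^n$ on a cofinite set. For instance, with $u_n = 2|\beta|\rho^n + \beta\lambda^n + \overline{\beta}\,\overline{\lambda}^n$ and $|\lambda|=\rho$ one has $\inf_n |u_n|/\rho^n = 0$. What $S$-units actually yields (see the Appendix) is $|u_n| = \Omega(\rho_{i^\star}^n \Lambda^{-n\varepsilon})$ for every $\varepsilon>0$, where $\Lambda$ bounds the absolute values of all Galois conjugates of the eigenvalues. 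This weaker bound still dominates the $O(n^d\rho_j^n)$ contributions for $j>i^\star$ once $\varepsilon$ is chosen so that $\rho_{i^\star}\Lambda^{-\varepsilon} > \rho_{i^\star+1}$, so your conclusion survives, but the stated estimate is false.
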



Define the \emph{index} of an eigenvalue of $A$ to be its multiplicity
as a root of the minimal polynomial of $A$. An eigenvalue is said to be \textit{simple} if it has index $1$ and \textit{repeated} otherwise. We can write matrix $A$
in the form $A=P^{-1}JP$ for some invertible matrix $P$ and block
diagonal Jordan matrix $J=\mathit{Diag}(J_1,\ldots,J_N)$, with each
block $J_i$ having the form
\[\begin{pmatrix} \lambda & 1 & 0 & \ldots & 0\\
                     0 & \lambda & 1& \ldots & 0\\
                    \vdots & \vdots & \vdots & \ddots & \vdots\\
                   0 & 0 & 0 & \ddots & 1\\
                   0 & 0 & 0 & \ldots & \lambda
\end{pmatrix} \, ,\]
where $\lambda$ is an eigenvalue of $A$ whose index equals the the dimension of the block.  
  The entries of $P$ 
are all algebraic numbers lying in the 
extension field of $\mathbb{Q}$ generated by the eigenvalues of $A$.

The $n$-th power of the matrix $J$ has the form
$J^n=\mathit{Diag}(J^n_1,\ldots,J^n_N)$, where each block $J_i^n$ has the form
\[\begin{pmatrix} 
\lambda^n & n\lambda^{n-1} & \binom{n}{2}\lambda^{n-2} 
& \ldots & \binom{n}{\nu-1}\lambda^{n-\nu+1}\\
                     
0 & \lambda^n & n\lambda_i^{n-1} & \ldots & 
\binom{n}{\nu-2}\lambda^{n-\nu+2}\\
                    
\vdots & \vdots & \vdots & \ddots & \vdots\\
                   0 & 0 & 0 & \ddots & n\lambda^{n-1}\\
                   0 & 0 & 0 & \ldots & \lambda^n
                 \end{pmatrix} \, ,\] where $\lambda$ is an eigenvalue of $A$ of index $\nu$, and $\binom{n}{k}=0$ if $n<k$.

                 Let $A$ have eigenvalues $\lambda_1,\ldots,\lambda_l$,
                 with respective indices $\nu_1,\ldots,\nu_l$.  
Given $\boldsymbol{u}\in\mathbb{R}^d$, from
                 our observations on the form of $J^n$, we can write
\begin{gather}
\boldsymbol{b}^TA^n\boldsymbol{u} =
\sum_{j=1}^l \sum_{k=0}^{\nu_j-1} 
\boldsymbol{\alpha}_{j,k}^T\boldsymbol{u}\, n^k\lambda_j^n \, ,
\label{eq:master}
\end{gather}
where the $\boldsymbol{\alpha}_{j,k}$ are vectors of algebraic numbers
that do not depend on $\boldsymbol{u}$, and the equation holds for all
$n\geq d$.

Since the characteristic polynomial of $A$ has integer coefficients,
the eigenvalues of $A$ are all algebraic integers.  Moreover, since
for any positive integer $t>0$ we have that $t\cdot
\boldsymbol{b}^TA^n\boldsymbol{u} \geq 0$ if and only if
$\boldsymbol{b}^TA^n\boldsymbol{u}\geq 0$, by rescaling we can assume that
the vectors $\boldsymbol{\alpha}_{j,k}$ in (\ref{eq:master}) are
comprised of algebraic integers.

Now let us partition the eigenvalues of $A$ into sets $S_1,\ldots,S_m$
by grouping eigenvalues of equal modulus.  Assume that $S_1$ contains
eigenvalues of maximum modulus, $S_2$ eigenvalues of the next greatest
modulus, etc.  Correspondingly we write $\mathbb{R}^d$ as a
direct sum of subspaces $\mathbb{R}^d = V_1 \oplus \ldots \oplus V_m$,
where each subspace $V_i$ is the sum of (generalised) eigenspaces of
$A$ associated to eigenvalues in $S_i$.  By the assumption that $A$ is
non-degenerate, i.e., that no quotient of two distinct eigenvalues is
a root of unity, $S_i$ cannot have both a positive and a negative real
eigenvalue of the same modulus.  Thus each set $S_i$ contains at most one
real eigenvalue.

\subsection{Eventual Non-Termination on Subspace $V_i$}
We first consider the eventual non-termination of \textsf{P4} on
initial vectors in the subspace $V_i$ for a fixed $i \in
\{1,\ldots,m\}$.  Writing $\mathit{ENT}_i := \mathit{ENT} \cap V_i$,
our goal is to show that $\mathit{ENT}_i$ is semi-algebraic.

Given $\boldsymbol{u} \in V_i$, membership of $\boldsymbol{u}$ in
$\mathit{ENT}_i$ can be characterised in terms of the \emph{ultimate
  positivity} of the sequence $\langle\,
\boldsymbol{b}^TA^n\boldsymbol{u}:n\in\mathbb{N}\,\rangle$.  More
precisely, $\boldsymbol{u} \in \mathit{ENT}_i$ if and only if
$\boldsymbol{b}^TA^n\boldsymbol{u} \geq 0$ for all but finitely many
$n$.  In particular, defining
\[ \mathit{ZERO}:=\{\boldsymbol{u} \in \mathbb{R}^d: \forall
n\geq d,\, \boldsymbol{b}^TA^n\boldsymbol{u}=0 \} \, \] and
$\mathit{ZERO}_i:=\mathit{ZERO}\cap V_i$, we have that
$\mathit{ZERO}_i\subseteq \mathit{ENT}_i$.

It is easy to see that $\mathit{ZERO}_i$ is semi-algebraic.  Indeed
the uniqueness part of~\cite[Proposition 2.11]{TUCS05} implies that
$\boldsymbol{b}^TA^n\boldsymbol{u}=0$ for all $n\geq d$ if and only if
each term $n^k\lambda_j^n$ has coefficient zero in the expression
(\ref{eq:master}).  Thus
\[ \mathit{ZERO} = \left\{ \boldsymbol{u}\in \mathbb{R}^d :
\bigwedge_{j=1}^{l}\bigwedge_{k=0}^{\nu_j-1}
\boldsymbol{\alpha}_{j,k}^T\boldsymbol{u}= 0 \right\} \, .\] is
semi-algebraic.  Since $V_i$ is a semi-algebraic subset of
$\mathbb{R}^d$, being spanned by a subset of the columns of $P$, it
follows that $\mathit{ZERO}_i$ is semi-algebraic.

\begin{proposition}
The set $\mathit{ENT}_i$ is semi-algebraic for each $i\in \{1,\ldots,m\}$.
\label{prop:semi-alg}
\end{proposition}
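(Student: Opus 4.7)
The plan is to identify a first-order formula $\Phi(\boldsymbol{u})$ over the ordered field of reals such that $\boldsymbol{u} \in V_i$ lies in $\mathit{ENT}_i$ if and only if $\Phi(\boldsymbol{u})$ holds; Tarski's quantifier-elimination theorem will then yield semi-algebraicness of $\mathit{ENT}_i$. Let $\rho > 0$ be the common modulus of the eigenvalues in $S_i$, set $\boldsymbol{\mu} := (\lambda_j/\rho)_{\lambda_j \in S_i} \in \mathbb{T}^{|S_i|}$, and let $\nu := \max_{\lambda_j \in S_i} \nu_j$. For $\boldsymbol{u} \in V_i$ only the eigenvalues in $S_i$ contribute to \eqref{eq:master}, so
\[
  \boldsymbol{b}^T A^n \boldsymbol{u}
  = \rho^n \sum_{k=0}^{\nu-1} n^k\, G_k(\boldsymbol{u},\boldsymbol{\mu}^n),
\]
where $G_k(\boldsymbol{u},\boldsymbol{z}) := \sum_{j:\,\lambda_j\in S_i,\,\nu_j>k} (\boldsymbol{\alpha}_{j,k}^T \boldsymbol{u})\,z_j$ is a polynomial in $\boldsymbol{u}$, $\boldsymbol{z}$, $\bar{\boldsymbol{z}}$ with algebraic coefficients. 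Since complex eigenvalues of $A$ occur in conjugate pairs, the relations $z_j z_{j'}=1$ (for $\lambda_{j'}=\overline{\lambda_j}$) hold on the torus $T:=T(\boldsymbol{\mu})$, so each $G_k$ takes real values on $V_i \times T$. By Theorem~\ref{dense}, $\{\boldsymbol{\mu}^n\}_{n\in\mathbb{N}}$ is a dense subset of the semi-algebraic set $T$.

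Let $\Phi(\boldsymbol{u})$ denote the formula $\forall \boldsymbol{z}\in T,\ \Psi(\boldsymbol{u},\boldsymbol{z})$, where $\Psi(\boldsymbol{u},\boldsymbol{z})$ asserts that either $G_k(\boldsymbol{u},\boldsymbol{z})=0$ for every $k$, or the largest $k$ with $G_k(\boldsymbol{u},\boldsymbol{z})\neq 0$ satisfies $G_k(\boldsymbol{u},\boldsymbol{z})>0$. This $\Phi$ is first-order over $(\mathbb{R},<,+,\cdot,0,1)$ with algebraic constants, and $T$ and $V_i$ are semi-algebraic, so once the equivalence $\boldsymbol{u}\in\mathit{ENT}_i \Leftrightarrow \Phi(\boldsymbol{u})$ is established, the proposition will follow by quantifier elimination. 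For the direction $\Phi(\boldsymbol{u}) \Rightarrow \boldsymbol{u}\in\mathit{ENT}_i$, I would stratify $T$ by the vanishing patterns of the $G_k(\boldsymbol{u},\cdot)$ and use continuity together with compactness to extract a finite open cover of $T$ on each piece of which some specific $G_{k^*}$ is bounded below by a positive constant while the higher-index $G_{k>k^*}$ are sufficiently small; asymptotic dominance of $n^{k^*}G_{k^*}$ would then force $\boldsymbol{b}^T A^n \boldsymbol{u}\geq 0$ for all sufficiently large $n$, no matter where $\boldsymbol{\mu}^n$ lies on $T$. For the converse, if $\Phi(\boldsymbol{u})$ fails at some $\boldsymbol{z}_0 \in T$ with leading nonzero $G_{k^*}(\boldsymbol{u},\boldsymbol{z}_0)<0$, Theorem~\ref{dense} supplies infinitely many $n$ with $\boldsymbol{\mu}^n$ arbitrarily close to $\boldsymbol{z}_0$, at which the dominant contribution $n^{k^*}G_{k^*}$ is strictly negative.

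The delicate point is the reverse direction when $k^* < \nu-1$: the higher-index terms $n^k G_k(\boldsymbol{u},\boldsymbol{\mu}^n)$ with $k > k^*$ vanish only in the limit $\boldsymbol{\mu}^n \to \boldsymbol{z}_0$, whereas their prefactors $n^k$ grow strictly faster than $n^{k^*}$, so the purely qualitative density supplied by Theorem~\ref{dense} does not by itself suppress them. Overcoming this will require a quantitative refinement, for instance by choosing $\boldsymbol{z}_0$ within---or approximating $\boldsymbol{\mu}^n$ along---the subvariety of $T$ cut out by the higher $G_k$, where those terms vanish exactly. This is where I expect the bulk of the work to lie. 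Once the equivalence is settled, $\mathit{ENT}_i = V_i \cap \{\boldsymbol{u}\in\mathbb{R}^d : \Phi(\boldsymbol{u})\}$ is manifestly semi-algebraic.
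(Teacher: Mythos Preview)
Your proposal never invokes the standing hypotheses under which Proposition~\ref{prop:semi-alg} is actually proved (those of Proposition~\ref{prop:main2}: either all complex eigenvalues of $A$ are simple, or $\dim A\leq 5$), and that is precisely where the gap lies. When the complex eigenvalues in $S_i$ are simple, only the positive real eigenvalue $\rho\in S_i$ can have index exceeding one; hence in your notation the functions $G_k(\boldsymbol{u},\boldsymbol{z})$ for $k\geq 1$ are \emph{constant} on $T(\boldsymbol{\mu})$ (the only coordinate of $\boldsymbol{\mu}$ contributing to a higher $G_k$ is $\rho/\rho=1$, which is pinned to $1$ throughout $T(\boldsymbol{\mu})$). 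Your ``delicate point'' therefore evaporates: either some $G_k(\boldsymbol{u},\cdot)\equiv\boldsymbol{\alpha}_k^T\boldsymbol{u}\neq 0$ for $k\geq 1$, and the sign of the largest such term decides membership in $\mathit{ENT}_i$ outright; or all of these vanish, leaving $\boldsymbol{b}^TA^n\boldsymbol{u}=\rho^n\,G_0(\boldsymbol{u},\boldsymbol{\mu}^n)$, for which density of $\{\boldsymbol{\mu}^n\}$ in $T(\boldsymbol{\mu})$ settles both directions with no competition between polynomial weights. The paper handles the remaining sub-cases separately (when $S_i$ has no positive real eigenvalue, \cite[Lemma~4]{Bra06} gives $\mathit{ENT}_i=\mathit{ZERO}_i$; the residual $\dim A\leq 5$ configuration is dispatched by hand).

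Without the simplicity hypothesis your ``delicate point'' is not a technicality to be overcome by a quantitative refinement of Theorem~\ref{dense}: it is essentially the Ultimate Positivity Problem for non-simple recurrences, which the paper explicitly flags as tied to hard open questions in Diophantine approximation (end of Section~\ref{sec:overview}). Moreover the same pathology infects your \emph{forward} direction, which you treat as routine. On a piece of your proposed cover where the leading nonzero index $k^*$ is strictly below $\nu-1$, the higher $G_k$ are merely small, not identically zero, so $n^kG_k(\boldsymbol{u},\boldsymbol{\mu}^n)$ may still overwhelm $n^{k^*}G_{k^*}(\boldsymbol{u},\boldsymbol{\mu}^n)$; whether it does depends on how fast $\boldsymbol{\mu}^n$ can approach the zero locus of the higher $G_k$, i.e., on Diophantine data you have no control over. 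Thus your proposed equivalence $\Phi(\boldsymbol{u})\Leftrightarrow\boldsymbol{u}\in\mathit{ENT}_i$ need not hold in general. The fix is to bring in the hypothesis so that no such competition arises, not to sharpen the density argument.
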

\begin{proof}
  We consider three (overlapping) cases.  Under the hypotheses of
  Proposition~\ref{prop:main2} at least one of these cases will apply.

\paragraph{Case I: $A$ has dimension at most $5$.}
Assume that $A$ has dimension at most $5$.  The situations in which
$S_i$ does not contain a positive real eigenvalue, or 
all of the complex eigenvalues in $S_i$ are simple, will be handled
under Cases II and III, below.  Otherwise, let $\lambda\in S_i$ be a complex
eigenvalue of index at least $2$.  Since $A$ has dimension at most
$5$, it must be the case that $\lambda$ and its complex conjugate
$\overline{\lambda}$ both have index exactly $2$.  Let $\rho \in S_i$
be the positive real eigenvalue.  Since
$A$ has dimension at most $5$, $\rho$ must be simple.  Thus
$S_i=\{\rho,\lambda,\overline{\lambda}\}$ contains all the eigenvalues
of $A$.

For $\boldsymbol{u}\in V_i$ we can write 
\begin{align*}
\boldsymbol{b}^TA^n\boldsymbol{u} = 
\big(\boldsymbol{\alpha}_0\rho^n  + (\boldsymbol{\beta}_0+ \boldsymbol{\beta}_1n)\lambda^n 
 + \overline{(\boldsymbol{\beta}_0+\boldsymbol{\beta}_1n)
\lambda^n}\big)^T\boldsymbol{u} \, ,
\label{eq:small}
\end{align*}
for all $n\geq d$, 
where $\boldsymbol{\alpha}_0$ is a vector of real algebraic numbers,
$\boldsymbol{\beta}_0,\boldsymbol{\beta}_1$ are vectors of complex
algebraic numbers.

If $\boldsymbol{\beta}_1^T\boldsymbol{u}\neq 0$, then as $n$ tends to
infinity the dominant terms on the right-hand side above are constant
multiples of $n\lambda^n$ and $n\overline{\lambda^n}$.  In this case
it follows from~\cite[Lemma 4]{Bra06} that
$\boldsymbol{b}^TA^n\boldsymbol{u}$ changes sign infinitely often as
$n$ grows, and hence $\boldsymbol{u}\not\in\mathit{ENT}_i$.

The argument in case $\boldsymbol{\beta}_1^T\boldsymbol{u}=0$ is a
simple version of the approach in Case III, however we include details
since the reader may find this special case instructive.

Define $f:\mathbb{T}\rightarrow\mathbb{R}$ by
\[ f(z)=\boldsymbol{\alpha}_0^T\boldsymbol{u}+
\boldsymbol{\beta}_0^T\boldsymbol{u}z+
\overline{\boldsymbol{\beta}_0^T\boldsymbol{u}z} \, .\]
Then $\boldsymbol{b}^TA^n\boldsymbol{u} = \rho^n f(\lambda^n/\rho^n)$ for all $n\geq d$.

Since $A$ is assumed to be non-degenerate, $\lambda/\rho$ is not a
root of unity.  Thus $\{\lambda^n/\rho^n:n\in\mathbb{N}\}$ is dense in
$\mathbb{T}$.  It follows that $\boldsymbol{u}\in\mathit{ENT}$ if and
only if $f(z)\geq 0$ for all $z\in\mathbb{T}$.  By inspection this
last condition is equivalent to $\boldsymbol{\alpha}_0^T\boldsymbol{u}
\geq 2|\boldsymbol{\beta}_0^T\boldsymbol{u}|$.
We conclude that
\[ \mathit{ENT}_i = \left\{ \boldsymbol{u}\in V_i : 
\boldsymbol{\beta}_1^T\boldsymbol{u}=0 \wedge
\boldsymbol{\alpha}_0^T\boldsymbol{u} \geq 2|\boldsymbol{\beta}_0^T\boldsymbol{u}| \right\} \, ,\]  
and hence $\mathit{ENT}_i$ is semi-algebraic.

\paragraph{Case II: $S_i$ does not contain a positive real eigenvalue.}
It follows from~\cite[Lemma 4]{Bra06} that if $S_i$ does not contain a
positive real eigenvalue then for $\boldsymbol{u}\in V_i$ the sequence
$\boldsymbol{b}^TA^n\boldsymbol{u}$ is either identically zero for $n\geq
d$ or is infinitely often strictly positive and infinitely often
strictly negative.  Thus in this case
$\mathit{ENT}_i=\mathit{ZERO}_i$.  But we have already shown that
$\mathit{ZERO}_i$ is semi-algebraic.

\paragraph{Case III: all complex eigenvalues in $S_i$ are simple.}
Suppose that all complex eigenvalues in $S_i$ are simple.  If $S_i$
contains no positive real eigenvalue then Case II applies.  Thus we
may assume that $S_i$ comprises a positive real eigenvalue $\rho$ of
index $t$ and simple complex eigenvalues
$\lambda_1,\overline{\lambda_1},\ldots,\lambda_s,\overline{\lambda_s}$.
Given $\boldsymbol{u}\in V_i$ we can write
\begin{eqnarray}
\boldsymbol{b}^TA^n\boldsymbol{u} &=& \boldsymbol{b}^TP^{-1}J^nP\boldsymbol{u}
\notag\\
&=& \left[\sum_{j=0}^{t-1} \boldsymbol{\alpha}_j n^j \rho^n + \sum_{j=1}^s 
(\boldsymbol{\beta}_j \lambda_j^n + \overline{\boldsymbol{\beta}_j \lambda^n_j}) 
\right]^T\boldsymbol{u} \, ,
\label{eq:exp-poly}
\end{eqnarray}
where the $\boldsymbol{\alpha}_j$ and $\boldsymbol{\beta}_j$ are
$d$-dimensional vectors of algebraic numbers, with all coefficients of
each $\boldsymbol{\alpha}_j$ being real.

Since $\rho=|\lambda_1|=\ldots=|\lambda_s|$, if
$\boldsymbol{\alpha}_j^T\boldsymbol{u}\neq 0$ for some strictly
positive index $j$, then, for the largest such index $j$, the term
$n^{j}\rho^n\boldsymbol{\alpha}_{j}^T\boldsymbol{u}$ is 
dominating on the right-hand side of (\ref{eq:exp-poly}).  In
particular, if $\boldsymbol{\alpha}_{j}^T\boldsymbol{u}>0$ then the
sequence $\boldsymbol{b}^TA^n\boldsymbol{u}$ is ultimately positive
(hence $\boldsymbol{u}\in \mathit{ENT}_i$), and if
$\boldsymbol{\alpha}_{j}^T\boldsymbol{u}<0$ then
$\boldsymbol{b}^TA^n\boldsymbol{u}$ is not ultimately positive (hence
$\boldsymbol{u}\not\in\mathit{ENT}_i$).  It follows that
\begin{gather}
\left\{ \boldsymbol{u} \in V_i : \bigvee_{j=1}^{t-1}\bigwedge_{k=j+1}^{t-1}
(\boldsymbol{\alpha}_j^T \boldsymbol{u}>0 \wedge
\boldsymbol{\alpha}_k^T \boldsymbol{u}=0) \right\} 
\label{eq:ENT1}
\end{gather}
is a subset of $\mathit{ENT}_i$.

The case that $\boldsymbol{\alpha}_j^T\boldsymbol{u}=0$ for all
$j=1,\ldots,t-1$ is more subtle since there is no single dominant term
in (\ref{eq:exp-poly}); this is where we employ the results of
Section~\ref{sec:mult} on multiplicative relations. In this case we
rewrite (\ref{eq:exp-poly}) as
\begin{gather}
\boldsymbol{b}^T A^n \boldsymbol{u} =
 \rho^n f\left(\frac{\lambda^n_1}{\rho^n},\ldots,
               \frac{\lambda^n_s}{\rho^n}\right)^T \boldsymbol{u} \, ,
\label{eq:f-exp}
\end{gather}
where $f:\mathbb{T}^s\rightarrow\mathbb{R}^d$ is defined by 
\[f(z_1,\ldots,z_s)=\boldsymbol{\alpha}_0 + \sum_{j=1}^s \boldsymbol{\beta}_jz_j + \overline{\boldsymbol{\beta}_jz_j} \, .\]
Defining $\boldsymbol\mu=(\lambda_1/\rho,\ldots,\lambda_s/\rho)$, we furthermore rewrite (\ref{eq:f-exp}) as
\begin{gather}  
\boldsymbol{b}^T A^n \boldsymbol{u} =
    \rho^n f(\boldsymbol\mu^n)^T \boldsymbol{u} \, . 
\label{eq:f2-exp}
\end{gather}

By Theorem~\ref{dense}, $\{ \boldsymbol\mu^n : n \in
\mathbb{N} \}$ is a dense subset of the torus $T(\boldsymbol{\mu})$.
Thus the right-hand side of (\ref{eq:f2-exp}) is non-negative for
every $n$ if and only if $f(\boldsymbol{z})^T\boldsymbol{u}\geq 0$ for
all $\boldsymbol{z}\in T(\boldsymbol{\mu})$.  It follows that
\begin{gather} \big \{ \boldsymbol{u}\in V_i : 
\forall \boldsymbol{z}\in T(\boldsymbol{\mu}),\,
                              f(\boldsymbol{z})^T\boldsymbol{u}\geq 0 
\big\}\, . 
\label{eq:ENT2}
\end{gather}
is a subset of $\mathit{ENT}_i$.

In Section~\ref{sec:mult} we observed that the set
$T(\boldsymbol{\mu})$ was (effectively) semi-algebraic.  It follows
that we can express the condition $\forall \boldsymbol{z}\in
T(\boldsymbol{\mu}),\, f(\boldsymbol{z})^T\boldsymbol{u}\geq 0$ in the
first-order theory of the reals.  By the Tarski-Seidenberg theorem
\cite{Tar51} on quantifier elimination, the set of $\boldsymbol{u}\in
\mathbb{R}^d$ satisfying this condition is semi-algebraic.  But now
$\mathit{ENT}_i$ is the union of the two semi-algebraic sets
(\ref{eq:ENT1}) and (\ref{eq:ENT2}), and therefore $\mathit{ENT}_i$ is
itself semi-algebraic.
\end{proof}
\subsection{Definition of a Witness Set}
Having shown that $\mathit{ZERO}_i$ and $\mathit{ENT}_i$ are
semi-algebraic sets for $i=1,\ldots,m$, we now define a witness set
$W$ for the loop \textsf{P4}.  

Given $\boldsymbol{u} \in \mathbb{R}^d$, write $\boldsymbol{u}=
\boldsymbol{u}_1+\ldots+\boldsymbol{u}_m$, with $\boldsymbol{u}_1 \in
V_1,\ldots,\boldsymbol{u}_m\in V_m$.  Say that $\boldsymbol{u}_i$ is
the \emph{dominant component} of $\boldsymbol{u}$ if
$\boldsymbol{u}_i\not\in \mathit{ZERO}_i$ and
$\boldsymbol{u}_j\in\mathit{ZERO}_j$ for all $j<i$.  The intuition is
that if $\boldsymbol{u}_i$ is dominant then the eventual
non-termination of \textsf{P4} on $\boldsymbol{u}$ is determined by
its eventual non-termination on $\boldsymbol{u}_i$.  However, to prove
this we need to assume $\boldsymbol{u}\in (\mathbb{A}\cap\mathbb{R})^d$.  Formally we
have:

\begin{proposition}
  If $\boldsymbol{u}_i$ is the dominant component of $\boldsymbol{u}
  \in (\mathbb{A}\cap\mathbb{R})^d$ then $\boldsymbol{u}\in\mathit{ENT}$ if and only
if $\boldsymbol{u}_i \in \mathit{ENT}$.
\label{prop:dominant}
\end{proposition}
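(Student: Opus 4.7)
The plan is to exploit the fact that the dominant component $\boldsymbol{u}_i$ carries the leading exponential modulus $\rho_i := |\lambda|$ (for $\lambda \in S_i$), while the remaining components contribute only strictly smaller exponentials. For $n \geq d$, since $\boldsymbol{u}_j \in \mathit{ZERO}_j$ for $j < i$, we have $\boldsymbol{b}^T A^n \boldsymbol{u}_j = 0$, so
\[ \boldsymbol{b}^T A^n \boldsymbol{u} \;=\; r_i(n) + R(n), \]
where $r_i(n) := \boldsymbol{b}^T A^n \boldsymbol{u}_i$ and $R(n) := \sum_{j>i} \boldsymbol{b}^T A^n \boldsymbol{u}_j$. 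From the master formula~(\ref{eq:master}), every eigenvalue contributing to $R(n)$ has modulus at most $\sigma := \max_{j>i} \rho_j < \rho_i$, so a routine estimate gives $|R(n)| \leq C \, n^{d-1} \sigma^n$ for some constant $C$ and all $n$.

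The crux is a matching lower bound on $|r_i(n)|$: I would show that there exists $\sigma' \in (\sigma, \rho_i)$ with $|r_i(n)| \geq (\sigma')^n$ for all sufficiently large $n$. This rests on two facts. First, since $\boldsymbol{u}_i \notin \mathit{ZERO}_i$, the sequence $r_i(n)$ is a not-identically-zero linear recurrence whose characteristic roots are eigenvalues of $A$; as $A$ is non-degenerate, $r_i$ is itself a non-degenerate LRS, so the Skolem--Mahler--Lech theorem forces its zero set to be finite. Second, since $\boldsymbol{u} \in (\mathbb{A} \cap \mathbb{R})^d$, the projection $\boldsymbol{u}_i$ also has algebraic entries (the subspace $V_i$ is spanned by algebraic generalised eigenvectors of $A$), so $r_i$ is an LRS over $\mathbb{A}$ and the $S$-units theorem of Evertse--van der Poorten--Schlickewei applies, yielding $|r_i(n)| \geq (\rho_i e^{-\varepsilon})^n$ for all large $n$ and any prescribed $\varepsilon > 0$. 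Choosing $\varepsilon$ small enough that $\rho_i e^{-\varepsilon} > \sigma$ delivers the required bound.

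Combining the two estimates, $|R(n)|/|r_i(n)| \to 0$, so for all sufficiently large $n$ the quantity $\boldsymbol{b}^T A^n \boldsymbol{u}$ has the same sign as $r_i(n)$ whenever the latter is nonzero; and by Skolem--Mahler--Lech it is nonzero for all large $n$. Since $\boldsymbol{u} \in \mathit{ENT}$ is equivalent to $\boldsymbol{b}^T A^n \boldsymbol{u} \geq 0$ for all large $n$, and $\boldsymbol{u}_i \in \mathit{ENT}$ is equivalent to $r_i(n) \geq 0$ for all large $n$, the equivalence follows in both directions: if $\boldsymbol{u}_i \in \mathit{ENT}$ then $r_i(n) > 0$ eventually and dominates $R(n)$; conversely, if $\boldsymbol{u}_i \notin \mathit{ENT}$ then $r_i(n) < 0$ infinitely often and drags $\boldsymbol{b}^T A^n \boldsymbol{u}$ negative at the same indices.

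The main obstacle is the lower bound on $|r_i(n)|$: the $S$-units theorem is deeply non-effective and crucially requires algebraic coefficients, which is precisely why the proposition is restricted to $\boldsymbol{u} \in (\mathbb{A} \cap \mathbb{R})^d$ rather than arbitrary real vectors. Without this bound, the oscillatory near-zeros of $r_i$ coming from the torus analysis of Case III could in principle let the tail $R(n)$ slip below zero at infinitely many indices, breaking the equivalence.
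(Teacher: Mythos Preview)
Your approach is essentially the paper's: decompose $\boldsymbol{b}^T A^n \boldsymbol{u} = r_i(n) + R(n)$, bound $|R(n)| = O(n^d \sigma^n)$ with $\sigma < \rho_i$, and show that $|r_i(n)|$ eventually dominates. The difference is structural. The paper mirrors the three cases of Proposition~\ref{prop:semi-alg}: Case~I (dimension at most $5$ with a repeated complex eigenvalue) is dispatched trivially since then $S_i$ already contains every eigenvalue and $R(n)=0$; Case~II (no positive real eigenvalue in $S_i$) uses \cite[Lemma~4]{Bra06} to obtain $r_i(n) < c\rho_i^n$ infinitely often for some $c<0$; and Case~III splits again on whether some term $\alpha n^k \rho_i^n$ with $k\geq 1$ dominates (elementary) or whether $r_i$ reduces to the simple form~(\ref{eq:diag}), and only in this last sub-case is the $S$-units theorem invoked. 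You instead appeal to the lower bound $|r_i(n)| \geq (\rho_i e^{-\varepsilon})^n$ uniformly, treating $r_i$ as a generic non-degenerate algebraic LRS.

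That claim is correct, but note that the $S$-units theorem as formulated in the paper's Appendix applies to sums $\sum_j \alpha_j \lambda_j^n$ of $S$-units; a summand $\alpha_{j,k}\, n^k \lambda_j^n$ with $k\geq 1$ is \emph{not} an $S$-unit for any fixed finite $S$, because $n$ contributes arbitrary prime factors. The paper's case analysis is exactly what strips away the polynomial part so that the $S$-units bound can be applied verbatim. Your route is cleaner but leans on a slightly stronger black box than the paper states (the general Evertse--van der Poorten--Schlickewei lower bound for arbitrary non-degenerate LRS, which does hold but needs its own reduction argument); the paper's route makes explicit where the standing hypotheses on $A$---simple complex eigenvalues, or dimension at most $5$---are actually consumed.
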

\begin{proof}
From the fact that $\boldsymbol{u}_i$ is dominant we have:
\begin{eqnarray}
 \boldsymbol{b}^TA^n\boldsymbol{u}&=&
   \boldsymbol{b}^TA^n(\boldsymbol{u}_1+\cdots+\boldsymbol{u}_m) \notag\\
&=&\boldsymbol{b}^TA^n(\boldsymbol{u}_i+\cdots+\boldsymbol{u}_m) 
\label{eq:part}
\end{eqnarray}
for all $n\geq d$.  Moreover, for each $j>i$ it is clear that
  $|\boldsymbol{b}^TA^n\boldsymbol{u}_j| = O(n^d\rho_j^n)$, where
  $\rho_j\geq 0$ is the modulus of the eigenvalues in $S_j$.

  We now consider three cases, mirroring the proof of
  Proposition~\ref{prop:semi-alg}.

  The first case is that $A$ has dimension at most $5$.  As observed
  in the proof of Proposition~\ref{prop:semi-alg}, all instances of
  this case that are not already covered by the second and third cases
  are such that $S_i$ contains all the eigenvalues of $A$, and hence
  $\boldsymbol{u}_i=\boldsymbol{u}$.  In this situation the proposition
  holds trivially.

  The second case is that $S_i$ does not contain a positive real
  eigenvalue.  Then it follows from~\cite[Lemma 4]{Bra06} that there
  is a constant $c<0$ such that
  $\boldsymbol{b}^TA^n\boldsymbol{u}_i<c\rho_i^n$ for infinitely many
  $n$.  In this case neither $\boldsymbol{u}_i$ nor $\boldsymbol{u}$
  are elements of $\mathit{ENT}$.

  It remains to consider the case that all complex eigenvalues in
  $S_i$ are simple.  Suppose that the dominant term in the expression
  for $\boldsymbol{b}^TA^n\boldsymbol{u}_i$ has the form $\alpha
  n^k\rho_i^n$ for some real constant $\alpha\neq 0$ and $k>0$.  If
  $\alpha>0$ then both $\boldsymbol{u}$ and $\boldsymbol{u}_i$ are in
  $\mathit{ENT}$ and if $\alpha<0$ then neither $\boldsymbol{u}$ or
  $\boldsymbol{u}_i$ are in $\mathit{ENT}$.  

  Otherwise, specialising the expression (\ref{eq:master}) to the case
  at hand, we have that
\begin{gather}
\label{eq:diag}
 \boldsymbol{b}^TA^n\boldsymbol{u}_i = \alpha_0 \rho_i^n+\sum\limits_{j=1}^s\beta_j\lambda_j^n+\overline{\beta_j\lambda_j^n}
\end{gather}
where $\alpha_0$ and the $\beta_j$ are algebraic-integer constants and
$\rho_i,\lambda_1,\overline{\lambda_1},\ldots,\lambda_s,\overline{\lambda_s} \in S_i$.  In this case one can use the
$S$-units theorem of Evertse, van der Poorten, and
Schlickewei~\cite{Evertse84,PS82} to show that for all $\varepsilon>0$ it
is the case that $\boldsymbol{b}^TA^n\boldsymbol{u}_i =\Omega\left(
  \rho_i^n \Lambda^{-n\varepsilon} \right)$,
where $\Lambda$ is an upper bound on the absolute value of eigenvalues of $A$ (see the Appendix for details).

From this lower bound, taking $\varepsilon$ suitably small, it follows
that $|\boldsymbol{b}^TA^n\boldsymbol{u}_j|=o(|\boldsymbol{b}^TA^n
\boldsymbol{u}_i|)$ for all $j>i$ and hence that
$\boldsymbol{u}\in\mathit{ENT}$ if and only if $\boldsymbol{u}_i\in
\mathit{ENT}$.
\end{proof}

Now we define a witness set $W$ for program \textsf{P4} by
\begin{align*}
W :=   \bigcup_{i=1}^m \{\boldsymbol{u}\in \mathbb{R}^d: \,
         & \boldsymbol{u}_i \mbox{ is the dominant component of } \boldsymbol{u},\\
         & \boldsymbol{u}_i \in \mathit{ENT} \} 
\cup \mathit{ZERO} \, .
\end{align*}

From the fact that $\mathit{ZERO}_i$, $\mathit{ENT}_i$, and $V_i$ are
semi-algebraic for $i=1,\ldots,m$, it is easy to see that $W$ is
semi-algebraic.  It moreover follows from Proposition~\ref{prop:dominant}
that $W \cap \mathbb{A}^d = \mathit{ENT} \cap \mathbb{A}^d$.

To conclude the proof of Proposition~\ref{prop:main2}, it 
remains to observe that the witness set $W$, like the actual set 
$\mathit{ENT}$ of eventually non-terminating points, is convex.
\begin{proposition}
The witness set $W$ is convex.
\end{proposition}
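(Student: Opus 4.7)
The plan is to fix $\boldsymbol{u}, \boldsymbol{v} \in W$ and $t \in (0,1)$ (boundary $t$ is trivial), set $\boldsymbol{w} := t\boldsymbol{u} + (1-t)\boldsymbol{v}$, decompose each vector along $V_1 \oplus \cdots \oplus V_m$, and write $i^*(\boldsymbol{x})$ for the smallest index $i$ with $\boldsymbol{x}_i \notin \mathit{ZERO}_i$ (or $\infty$ if $\boldsymbol{x} \in \mathit{ZERO}$). In this notation $\boldsymbol{x} \in W$ amounts to either $i^*(\boldsymbol{x}) = \infty$, or $\boldsymbol{x}_{i^*(\boldsymbol{x})} \in \mathit{ENT}_{i^*(\boldsymbol{x})}$. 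Without loss of generality I take $i := i^*(\boldsymbol{u}) \leq i^*(\boldsymbol{v}) =: j$ and proceed by case analysis on the pair $(i,j)$.

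Before the case split I record three structural facts. First, $\mathit{ENT}_i + \mathit{ZERO}_i = \mathit{ENT}_i$: for $\boldsymbol{z} \in \mathit{ZERO}_i$ the quantity $\boldsymbol{b}^T A^n \boldsymbol{z}$ vanishes for all $n \geq d$, so the tails of $\boldsymbol{b}^T A^n(\boldsymbol{x}+\boldsymbol{z})$ and $\boldsymbol{b}^T A^n \boldsymbol{x}$ agree. Second, $\mathit{ENT}$ is a positive cone, since multiplication by a positive scalar preserves ultimate non-negativity. Third, if $\boldsymbol{x}, -\boldsymbol{x} \in \mathit{ENT}_i$, then $\boldsymbol{b}^T A^n \boldsymbol{x}$ is eventually both non-negative and non-positive and hence eventually zero; the uniqueness part of~\cite[Proposition 2.11]{TUCS05}, already invoked when showing $\mathit{ZERO}$ is semi-algebraic, then forces $\boldsymbol{b}^T A^n \boldsymbol{x} = 0$ for all $n \geq d$, i.e.\ $\boldsymbol{x} \in \mathit{ZERO}_i$. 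I also use that $\mathit{ENT}_i = \mathit{ENT} \cap V_i$ is convex, as the intersection of the convex set $\mathit{ENT}$ with the subspace $V_i$.

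The easy cases fall out quickly. If $i = j = \infty$ then both $\boldsymbol{u}, \boldsymbol{v} \in \mathit{ZERO}$ and $\boldsymbol{w} \in \mathit{ZERO} \subseteq W$ because $\mathit{ZERO}$ is a linear subspace. If $i < j$ (allowing $j = \infty$), then for every $k < i$ both $\boldsymbol{u}_k, \boldsymbol{v}_k \in \mathit{ZERO}_k$ give $\boldsymbol{w}_k \in \mathit{ZERO}_k$; at index $i$ one has $\boldsymbol{v}_i \in \mathit{ZERO}_i$ (a subspace) and $\boldsymbol{u}_i \notin \mathit{ZERO}_i$, so $\boldsymbol{w}_i = t\boldsymbol{u}_i + (1-t)\boldsymbol{v}_i$ also lies outside $\mathit{ZERO}_i$ for $t \in (0,1)$, while convexity of $\mathit{ENT}_i$ together with $\boldsymbol{u}_i \in \mathit{ENT}_i$ and $\boldsymbol{v}_i \in \mathit{ZERO}_i \subseteq \mathit{ENT}_i$ places $\boldsymbol{w}_i \in \mathit{ENT}_i$. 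Thus $i^*(\boldsymbol{w}) = i$ and $\boldsymbol{w} \in W$.

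The main obstacle I expect is the remaining case $i = j < \infty$, where $\boldsymbol{u}_i, \boldsymbol{v}_i \in \mathit{ENT}_i \setminus \mathit{ZERO}_i$ and it is a priori possible for cancellation to push $\boldsymbol{w}_i$ into $\mathit{ZERO}_i$, shifting the dominant index of $\boldsymbol{w}$ strictly past $i$ into coordinates over which $\boldsymbol{u}$ and $\boldsymbol{v}$ carry no information. To rule this out I would prove the key lemma that, for $t \in (0,1)$ and $\boldsymbol{x}, \boldsymbol{y} \in \mathit{ENT}_i \setminus \mathit{ZERO}_i$, the combination $t\boldsymbol{x} + (1-t)\boldsymbol{y}$ never lies in $\mathit{ZERO}_i$. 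Supposing otherwise, $\boldsymbol{x} \in -c\boldsymbol{y} + \mathit{ZERO}_i$ with $c := (1-t)/t > 0$; by the first structural fact $\mathit{ENT}_i$ is a union of cosets of $\mathit{ZERO}_i$, so $\boldsymbol{x} \in \mathit{ENT}_i$ forces the entire coset into $\mathit{ENT}_i$, giving $-c\boldsymbol{y} \in \mathit{ENT}_i$; the second fact scales this to $-\boldsymbol{y} \in \mathit{ENT}_i$; and the third, applied to $\boldsymbol{y}$, yields $\boldsymbol{y} \in \mathit{ZERO}_i$, a contradiction. Once the lemma is in place, $\boldsymbol{w}_i \notin \mathit{ZERO}_i$, convexity of $\mathit{ENT}_i$ delivers $\boldsymbol{w}_i \in \mathit{ENT}_i$, and therefore $i^*(\boldsymbol{w}) = i$ with $\boldsymbol{w} \in W$.
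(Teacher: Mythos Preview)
Your argument is correct and follows the same case analysis as the paper: decompose along $V_1\oplus\cdots\oplus V_m$, set $k=\min\{i^*(\boldsymbol u),i^*(\boldsymbol v)\}$, note $\boldsymbol w_l\in\mathit{ZERO}_l$ for $l<k$ by linearity of $\mathit{ZERO}_l$, and obtain $\boldsymbol w_k\in\mathit{ENT}_k$ by convexity of $\mathit{ENT}_k$.

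The one place you go beyond the paper is the case $i^*(\boldsymbol u)=i^*(\boldsymbol v)=i$. The paper, having obtained $\boldsymbol x_k\in\mathit{ENT}_k$ at $k=\min\{i,j\}$, simply writes ``It follows that $\boldsymbol x\in W$'' and stops, without verifying that $\boldsymbol x_k\notin\mathit{ZERO}_k$, i.e.\ that $k$ really is the dominant index of the convex combination. You correctly flag this as the only non-trivial point: if cancellation pushed $\boldsymbol w_i$ into $\mathit{ZERO}_i$, the dominant index of $\boldsymbol w$ would exceed $i$ and the components $\boldsymbol u_l,\boldsymbol v_l$ for $l>i$ carry no useful information. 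Your no-cancellation lemma---via $\mathit{ENT}_i+\mathit{ZERO}_i=\mathit{ENT}_i$, positive homogeneity of $\mathit{ENT}$ in the homogeneous loop \textsf{P4}, and the observation that $\boldsymbol x,-\boldsymbol x\in\mathit{ENT}_i$ forces the tail of $\boldsymbol b^TA^n\boldsymbol x$ to vanish and hence $\boldsymbol x\in\mathit{ZERO}_i$---cleanly closes this step that the paper leaves implicit.
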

\begin{proof}
  Suppose $\boldsymbol y,\boldsymbol z\in W$ and let $\boldsymbol
  x=\lambda\boldsymbol y+(1-\lambda)\boldsymbol z$, where $0 < \lambda
  < 1$. Moreover, write
  $\boldsymbol{x}=\boldsymbol{x}_1+\ldots+\boldsymbol{x}_m$, where
  $\boldsymbol{x}_1\in V_1,\ldots,\boldsymbol{x}_m\in V_m$, and
  likewise for $\boldsymbol{y}$ and $\boldsymbol{z}$.

If $\boldsymbol{y},\boldsymbol{z}\in \mathit{ZERO}$ then
$\boldsymbol{x}\in \mathit{ZERO}$ since the latter is a convex set.

Suppose that $\boldsymbol{y}\in \mathit{ZERO}$ and $\boldsymbol{z}_i \in \mathit{ENT}$ is dominant for $\boldsymbol{z}$ for some index $i\in\{1,\ldots,m\}$.
Then $\boldsymbol{x}_i$ is dominant for $\boldsymbol{x}$, and
$\boldsymbol{x}_i \in \mathit{ENT}$.  Thus $\boldsymbol{x}\in W$.

Otherwise, let $\boldsymbol y_i$ be dominant for $\boldsymbol y$ and
$\boldsymbol z_j$ be dominant for $\boldsymbol z$ for some
$i,j\in\{1,\ldots,m\}$.  Then $\boldsymbol x_k\in \mathit{ZERO}_k$ for
all $k<\min\lbrace i,j\rbrace$ since $\mathit{ZERO}_k$ is convex.
Moreover if $k=\min\lbrace i,j\rbrace$ then $\boldsymbol
y_k,\boldsymbol z_k \in \mathit{ENT}_k$, and hence $\boldsymbol x_k
\in \mathit{ENT}_k$ by convexity of $\mathit{ENT}_k$.  It follows that
$\boldsymbol{x}\in W$.
\end{proof}

This concludes the proof of Proposition~\ref{prop:main2}.
In the remaining part of this section we show that
$\overline{\mathit{ENT}}=\overline{W}$.  

The inclusion $\overline{W}\subseteq \overline{ENT}$ can be shown
using the fact that the set algebraic points in any semi-algebraic set
is dense in that set. (See the Appendix for details).  From
this we have: 
\begin{align*}
\overline{W}=\overline{W\cap\mathbb{A}^d}=\overline{\mathit{ENT}\cap\mathbb{A}^d} \subseteq \overline{\mathit{ENT}}\cap\overline{\mathbb{A}^d}=\overline{\mathit{ENT}}
\end{align*}

The reverse inclusion, $\overline{ENT}\subseteq\overline{W}$, can be
shown in similar fashion but this time using the fact that
$\mathit{ENT}\cap\mathbb{A}^d$ is dense in $\mathit{ENT}$.  Our
remaining goal is this last fact, which is established in
Corollary~\ref{corl:dense} below.

We have previously shown that a vector of algebraic numbers
$\boldsymbol{u}\in (\mathbb{A}\cap\mathbb{R})^d$ is eventually
non-terminating if and only if its dominant component
$\boldsymbol{u}_i$ is eventually non-terminating.  We now prove a
partial result of this nature for general vectors $\boldsymbol{u}\in
\mathbb{R}^d$.

\begin{proposition}
  Suppose that $\boldsymbol u=\boldsymbol u_1+\cdots+\boldsymbol
  u_m\in \mathbb{R}^d$, where $\boldsymbol u_1\in
  V_1,\ldots,\boldsymbol u_m\in V_m$. Then $\boldsymbol
  u\in\mathit{ENT}$ implies that its dominant component $\boldsymbol
  u_i$ is also in \textit{ENT}.
\label{prop:dom2}
\end{proposition}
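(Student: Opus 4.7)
The plan is to prove the contrapositive: assuming $\boldsymbol{u}_i \notin \mathit{ENT}$, I exhibit infinitely many $n$ at which $\boldsymbol{b}^T A^n \boldsymbol{u} < 0$, which forces $\boldsymbol{u}\notin\mathit{ENT}$. Since $\boldsymbol{u}_i$ is the dominant component, $\boldsymbol{u}_j\in\mathit{ZERO}_j$ for $j<i$, so for all $n\geq d$,
\[
\boldsymbol{b}^T A^n \boldsymbol{u} \;=\; \boldsymbol{b}^T A^n \boldsymbol{u}_i \;+\; \sum_{j>i} \boldsymbol{b}^T A^n \boldsymbol{u}_j.
\]
Letting $\rho_i$ denote the common modulus of the eigenvalues in $S_i$, each tail term obeys $|\boldsymbol{b}^T A^n \boldsymbol{u}_j| = O(n^d \rho_j^n)$ with $\rho_j<\rho_i$, so the whole tail is $o(\rho_i^n)$. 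Hence it suffices to produce $c>0$ and infinitely many $n$ with $\boldsymbol{b}^T A^n \boldsymbol{u}_i < -c\rho_i^n$.

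Unlike the algebraic setting of Proposition~\ref{prop:dominant}, I cannot invoke the $S$-units theorem to lower-bound $|\boldsymbol{b}^T A^n\boldsymbol{u}_i|$; fortunately a one-sided bound along a subsequence is all I need, and no hypothesis on $\boldsymbol{u}$ is used for it. Mirroring the case split of Proposition~\ref{prop:semi-alg}, the dimension-$\leq 5$ subcase not absorbed into the others forces $S_i$ to contain every eigenvalue, so $\boldsymbol{u}=\boldsymbol{u}_i$ and the claim is immediate. If $S_i$ contains no positive real eigenvalue, \cite[Lemma~4]{Bra06} already furnishes a constant $c<0$ and infinitely many $n$ with $\boldsymbol{b}^T A^n \boldsymbol{u}_i < c\rho_i^n$, as required.

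The substantive case is when all complex eigenvalues in $S_i$ are simple. Using (\ref{eq:exp-poly}), I split on whether some $\boldsymbol{\alpha}_j^T\boldsymbol{u}_i\neq 0$ for $j\geq 1$. If so, let $j^\star$ be the largest such index: the characterisation of $\mathit{ENT}_i$ via (\ref{eq:ENT1}) and the hypothesis $\boldsymbol{u}_i\notin\mathit{ENT}$ force $\boldsymbol{\alpha}_{j^\star}^T\boldsymbol{u}_i<0$, and the dominance of the $n^{j^\star}\rho_i^n$ term over the others yields the bound. Otherwise, writing $\boldsymbol{\mu}=(\lambda_1/\rho_i,\ldots,\lambda_s/\rho_i)$, we have $\boldsymbol{b}^T A^n\boldsymbol{u}_i=\rho_i^n f(\boldsymbol{\mu}^n)^T\boldsymbol{u}_i$ for the continuous real-valued function $f$ from the proof of Proposition~\ref{prop:semi-alg}. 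The characterisation (\ref{eq:ENT2}) of $\mathit{ENT}_i$ then says that $\boldsymbol{u}_i\notin\mathit{ENT}_i$ iff some $\boldsymbol{z}_0\in T(\boldsymbol{\mu})$ satisfies $f(\boldsymbol{z}_0)^T\boldsymbol{u}_i<0$; by continuity, pick $\delta>0$ and an open neighbourhood $U\subseteq T(\boldsymbol{\mu})$ of $\boldsymbol{z}_0$ with $f(\boldsymbol{z})^T\boldsymbol{u}_i<-\delta$ on $U$.

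The main technical step is upgrading Theorem~\ref{dense} from plain density of $\{\boldsymbol{\mu}^n\}$ in $T(\boldsymbol{\mu})$ to the statement that the orbit enters every nonempty open $U\subseteq T(\boldsymbol{\mu})$ \emph{infinitely often}. This follows by the standard shift argument: for each $N\geq 0$, the tail orbit $\{\boldsymbol{\mu}^n : n\geq N\}$ is the image of $\{\boldsymbol{\mu}^k : k\geq 0\}$ under left-multiplication by $\boldsymbol{\mu}^N$, which is a homeomorphism of $T(\boldsymbol{\mu})$, and hence remains dense; thus $\boldsymbol{\mu}^n\in U$ for some $n\geq N$, for every $N$. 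Applying this to the chosen $U$ gives $\boldsymbol{b}^T A^n \boldsymbol{u}_i < -\delta\rho_i^n$ for infinitely many $n$, and combining with the $o(\rho_i^n)$ tail estimate concludes that $\boldsymbol{b}^T A^n \boldsymbol{u}<0$ along this same subsequence, proving $\boldsymbol{u}\notin\mathit{ENT}$.
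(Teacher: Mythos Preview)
Your proof is correct and follows essentially the same contrapositive strategy as the paper: reduce to showing $\boldsymbol{b}^TA^n\boldsymbol{u}_i < -c\rho_i^n$ infinitely often, then absorb the $o(\rho_i^n)$ tail. The paper's proof is terser---it declares all cases other than the one where $\boldsymbol{b}^TA^n\boldsymbol{u}_i$ has the form (\ref{eq:diag}) to be ``trivial'' and, in that remaining case, appeals directly to density (Theorem~\ref{dense}) plus continuity of $f$ to get the inequality for infinitely many $n$---whereas you spell out the full case split and add the explicit shift-by-$\boldsymbol{\mu}^N$ homeomorphism argument to upgrade density to infinite recurrence, which is a welcome clarification but not a different idea.
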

\begin{proof}
  The only non-trivial case corresponds to the situation in which
  $\boldsymbol b^T A^n \boldsymbol u_i$ is of the form (\ref{eq:diag}).
  Let $f$ and $\boldsymbol \mu$ be as in (\ref{eq:f-exp}), that is, so
  that $\boldsymbol b^T A^n \boldsymbol u_i=\rho_i^n f (\boldsymbol
  \mu^n)^T \boldsymbol u_i$. If $\boldsymbol u_i\not\in\mathit{ENT}$,
  then there exists some constant $c<0$ and some $\boldsymbol z\in T(\boldsymbol \mu)$ such that $f(\boldsymbol z)^T \boldsymbol u_i=c$. Therefore, for any $\varepsilon>0$, $\boldsymbol b^T A^n \boldsymbol u_i<(c+\varepsilon)\rho_i^n$ holds for infinitely many $n$, due to Proposition~\ref{dense} and to continuity of $f$, and so $\boldsymbol u\not\in\mathit{ENT}$.
\end{proof}

\begin{corollary}
$\mathit{ENT}\cap\mathbb{A}^d$ is dense in \textit{ENT}.
\label{corl:dense}
\end{corollary}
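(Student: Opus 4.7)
The plan is to show that any $\boldsymbol u\in\mathit{ENT}$ admits algebraic approximations from inside $\mathit{ENT}$, by perturbing each summand of the decomposition $\boldsymbol u=\boldsymbol u_1+\cdots+\boldsymbol u_m$ relative to the eigenvalue-modulus subspaces $V_1,\ldots,V_m$. The only extra ingredient beyond the propositions already established is the standard fact, invoked just above the corollary to handle $\overline W\subseteq\overline{\mathit{ENT}}$, that the algebraic points of any semi-algebraic set defined over $\mathbb A$ form a dense subset. Since the subspaces $V_j$, the sets $\mathit{ZERO}_j$, and the sets $\mathit{ENT}_j$ are all semi-algebraic over $\mathbb A$ (the columns of $P$ and the coefficient vectors $\boldsymbol\alpha_{j,k}$ are algebraic), this density is available componentwise.

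Fix $\boldsymbol u\in\mathit{ENT}$ and $\varepsilon>0$. If $\boldsymbol u\in\mathit{ZERO}$, then since $\mathit{ZERO}\subseteq\mathit{ENT}$ is itself semi-algebraic over $\mathbb A$, density immediately gives an algebraic $\boldsymbol v\in\mathit{ZERO}\cap\mathbb A^d$ within $\varepsilon$ of $\boldsymbol u$. Otherwise, let $i$ be the least index with $\boldsymbol u_i\notin\mathit{ZERO}_i$, so that $\boldsymbol u_i$ is the dominant component of $\boldsymbol u$; Proposition~\ref{prop:dom2} then tells us that $\boldsymbol u_i\in\mathit{ENT}_i$. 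I would then build an algebraic approximation $\boldsymbol v=\boldsymbol v_1+\cdots+\boldsymbol v_m$ whose dominant component is still the $i$-th one, by: choosing $\boldsymbol v_j\in\mathit{ZERO}_j\cap\mathbb A^d$ within $\varepsilon/m$ of $\boldsymbol u_j$ for $j<i$; choosing $\boldsymbol v_i\in\mathit{ENT}_i\cap\mathbb A^d$ within $\varepsilon/m$ of $\boldsymbol u_i$, with $\boldsymbol v_i\notin\mathit{ZERO}_i$ (possible because $\mathit{ZERO}_i$ is a closed semi-algebraic set missing $\boldsymbol u_i$, so a sufficiently small neighbourhood of $\boldsymbol u_i$ in $\mathit{ENT}_i$ avoids it and still contains algebraic points); and choosing any $\boldsymbol v_j\in V_j\cap\mathbb A^d$ within $\varepsilon/m$ of $\boldsymbol u_j$ for $j>i$.

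It remains only to conclude that $\boldsymbol v\in\mathit{ENT}$: by construction $\boldsymbol v$ is algebraic, its dominant component is $\boldsymbol v_i\in\mathit{ENT}$, and Proposition~\ref{prop:dominant} — whose hypothesis of algebraicity is exactly why we took care to approximate inside $\mathbb A^d$ — yields $\boldsymbol v\in\mathit{ENT}$. A triangle-inequality bound then gives $\|\boldsymbol u-\boldsymbol v\|<\varepsilon$. The principal subtlety to flag is the one already resolved by Proposition~\ref{prop:dom2}: without knowing \emph{a priori} that the dominant component $\boldsymbol u_i$ of a general $\boldsymbol u\in\mathit{ENT}$ belongs to $\mathit{ENT}_i$, there would be nothing to approximate on the crucial coordinate, and the construction would collapse. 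Once that is in hand, the remaining work is purely topological, relying on the closedness of each $\mathit{ZERO}_j$, the continuity of the projections onto the $V_j$, and the density of algebraic points in semi-algebraic sets defined over $\mathbb A$.
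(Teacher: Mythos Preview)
Your proposal is correct and follows essentially the same approach as the paper's proof: decompose $\boldsymbol u$ along the $V_j$, use Proposition~\ref{prop:dom2} to place the dominant component $\boldsymbol u_i$ in $\mathit{ENT}_i$, approximate each $\boldsymbol u_j$ by an algebraic point in the appropriate semi-algebraic set ($\mathit{ZERO}_j$ for $j<i$, $\mathit{ENT}_i$ for $j=i$, $V_j$ for $j>i$), and then invoke Proposition~\ref{prop:dominant}. You are in fact slightly more careful than the paper in explicitly arranging $\boldsymbol v_i\notin\mathit{ZERO}_i$ so that $\boldsymbol v_i$ really is the dominant component of $\boldsymbol v$, a point the paper asserts without justification.
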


\begin{proof}
  At several points we will rely on the fact that if
  $X\subseteq\mathbb{R}^d$ is semi-algebraic, then the algebraic
  points in $X$ are dense in $X$.  (See Appendix for details.)

  Fix $\boldsymbol u\in\mathit{ENT}$ and let $\varepsilon>0$ be
  given. We will find $\boldsymbol v\in \mathit{ENT}\cap \mathbb{A}^d$
  such that $||\boldsymbol u-\boldsymbol v||<\varepsilon$.
 
  The case in which $\boldsymbol u \in \mathit{ZERO}$ is easy since
  $\mathit{ZERO}$ is semi-algebraic and so we can take $\boldsymbol v$
  to be an algebraic point in $\mathit{ZERO}$ that is suitably close
  to $\boldsymbol u$.

  Suppose now that $\boldsymbol u=\boldsymbol u_1+\cdots+\boldsymbol
  u_m$, where $\boldsymbol u_1\in V_1,\ldots,\boldsymbol u_m\in V_m$,
  with $\boldsymbol u_i$ the dominant component of $\boldsymbol u$. By
  Proposition~\ref{prop:dom2}, $\boldsymbol u\in\mathit{ENT}$ implies
  that $\boldsymbol u_i\in \mathit{ENT}$. Since $\mathit{ENT}\cap V_i$
  is semi-algebraic, we can pick $\boldsymbol v_i\in \mathit{ENT}\cap
  V_i\cap\mathbb{A}^d$ such that $\| \boldsymbol v_i -\boldsymbol u_i
  \|<\varepsilon/n$. For each $j>i$, we pick some $\boldsymbol v_j\in
  V_j\cap \mathbb{A}^d$ for which $\| \boldsymbol v_j - \boldsymbol
  u_j\|<\varepsilon/n$.  For each $j<i$ we pick some $v_j \in
  \mathit{ZERO}\cap V_j\cap \mathbb{A}^d$ for which $\| \boldsymbol
  v_j - \boldsymbol u_j\|<\varepsilon/n$.

  Then, letting $\boldsymbol v=\boldsymbol v_1+\cdots+\boldsymbol v_m
  \in \mathbb{A}^d$, it follows that $\|\boldsymbol u-\boldsymbol
  v\|<\varepsilon$.  Finally, by Proposition~\ref{prop:dominant} we
  have $\boldsymbol v\in\mathit{ENT}$ since $\boldsymbol v_i$ is the
  dominant component of $\boldsymbol v$ and $\boldsymbol v_i \in
  \mathit{ENT}$ by construction.
\end{proof}
\section{Complexity Analysis}
\label{sec:complexity}

The purpose of this section is to justify our previous claims about
the complexity of the algorithm presented in this paper. We do this by
proving the following result.

\begin{proposition}
Our procedure requires space $\mathit{poly}(\log\max_{i,j}\lvert A_{ij}\rvert,d)^{\mathit{poly}(d)}$.
\end{proposition}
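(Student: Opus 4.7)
The plan is to bound the space used by each phase of the algorithm from Sections~\ref{sec:mult} and~4 and check that each contribution fits within $\mathit{poly}(\log\max_{i,j}|A_{ij}|,d)^{\mathit{poly}(d)}$. The phases are: (i) computing the eigenvalues and a Jordan decomposition of $A$; (ii) partitioning eigenvalues into the sets $S_i$ and writing down semi-algebraic descriptions of the subspaces $V_i$ and of each $\mathit{ZERO}_i$; (iii) producing a semi-algebraic description of $\mathit{ENT}_i$ in each of the cases of Proposition~\ref{prop:semi-alg}; (iv) assembling these into the witness set $W$; and (v) invoking the Khachiyan--Porkolab procedure to test $W$ for an integer point. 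Phases (i), (ii), (iv) and (v) stay comfortably within the target bound: eigenvalues of $A$ are algebraic of degree at most $d$ with minimal polynomials whose coefficients have bit-length $\mathit{poly}(\log\max_{i,j}|A_{ij}|,d)$; (ii) and (iv) amount to linear algebra and boolean combination on top of $\mathit{ENT}_i$; and (v) introduces only the bound $SD^{O(d^4)}$ from Khachiyan and Porkolab on the size of an integer witness.

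The non-trivial contribution lives in phase (iii), and more precisely in Case III of the proof of Proposition~\ref{prop:semi-alg}. I would first invoke Masser's Theorem~\ref{masser} to obtain a basis of the group of multiplicative relations $L(\boldsymbol{\mu})$ with entries bounded by $(D\log H)^{O(s^2)}$, where $s\leq d$, $D\leq d$ and $\log H=\mathit{poly}(\log\max_{i,j}|A_{ij}|,d)$. Since membership in $L(\boldsymbol{\mu})$ reduces to an instance of the existential theory of the reals, hence lies in \textbf{PSPACE}, a brute-force search for such a basis fits in space $\mathit{poly}(\log\max_{i,j}|A_{ij}|,d)^{O(d^2)}$. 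This in turn yields a quantifier-free semi-algebraic description of $T(\boldsymbol{\mu})$ whose defining polynomials have degrees and coefficient bit-lengths controlled by the same bound.

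Next I would apply a real quantifier-elimination procedure (for instance Renegar's) to the formula $\forall\boldsymbol{z}\in T(\boldsymbol{\mu}),\,f(\boldsymbol{z})^T\boldsymbol{u}\geq 0$ appearing in~(\ref{eq:ENT2}). The formula has a single quantifier block of at most $s\leq d$ variables over $d$ free variables, and all polynomials involved have degree and bit-length within $\mathit{poly}(\log\max_{i,j}|A_{ij}|,d)^{O(d^2)}$; standard quantifier-elimination bounds are singly exponential in the number of variables for a fixed number of alternations, so the resulting quantifier-free description of $\mathit{ENT}_i$, and hence of $W$, still fits in $\mathit{poly}(\log\max_{i,j}|A_{ij}|,d)^{\mathit{poly}(d)}$ space.

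The main obstacle I anticipate is the careful accounting of compounded exponents: Masser's theorem already contributes $O(d^2)$ in the exponent of the description size of $T(\boldsymbol{\mu})$, and quantifier elimination multiplies this by further polynomial factors in $d$. The essential observation is that the total exponent remains polynomial in $d$, so although the procedure is exponential in the dimension, the dependence on the bit-size of $A$ stays polynomial, matching the claimed bound.
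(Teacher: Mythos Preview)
Your outline tracks the paper's argument closely---Masser's theorem to bound the description of $T(\boldsymbol{\mu})$, then quantifier elimination to get a quantifier-free description of $\mathit{ENT}_i$, then Khachiyan--Porkolab on the assembled witness set---and the exponent bookkeeping you sketch is essentially what the paper does (using the Basu--Pollack--Roy bounds of Theorem~\ref{thm:quant-elim} in place of Renegar).

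There is, however, one genuine omission in your accounting. Your phases (i)--(v) analyse only the algorithm of Section~4, which already assumes the update matrix is \emph{non-degenerate}. But the proposition concerns the full procedure, and the reduction to non-degeneracy in Section~\ref{sec:overview} replaces $A$ by $A^L$, where $L$ is the least common multiple of the orders of all eigenvalue ratios that are roots of unity; one has $L=2^{O(d\sqrt{\log d})}$. The paper singles this out explicitly as one of three steps requiring super-polynomial space: the entries of $A^L$ have bit-length $\log\max_{i,j}|A^L_{ij}|\leq L\log(d\max_{i,j}|A_{ij}|)$, and moreover the witness set for \textsf{P4} is obtained as an intersection of $L$ preimages of the witness set for \textsf{P5}. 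You need to verify that this $L$-fold blow-up still sits inside $\mathit{poly}(\log\max_{i,j}|A_{ij}|,d)^{\mathit{poly}(d)}$; it does, since $L\leq 2^{d^2}$, but without this step your analysis only covers Proposition~\ref{prop:main2} rather than the full procedure the proposition is about.
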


\begin{proof}
There are three critical steps in our procedure for which a
super-polynomial amount of space is required: when reducing to the
case in which $A$ is non-degenerate, when performing quantifier
elimination, and when testing whether the witness set $W$ intersects
the integer lattice.

The last of these steps runs in space $SD^{O(d^4)}$, where $S$ denotes
the size of the representation of the quantifier-free formula defining
the witness set $W$, $D$ denotes the maximum degree of the polynomials
occurring in that formula, and $d$ denotes the dimension of the
ambient space. Since $d$ remains fixed throughout the procedure (apart
from an increase by $1$ in the reduction to the homogeneous case), it
remains to show that $S$ and $D$ are bounded by an expression of the
form $\mathit{poly}(\log\max_{i,j}\lvert
A_{ij}\rvert,d)^{\mathit{poly}(d)}$.

The reduction to the case in which $A$ is non-degenerate entails an
increase by a factor of $\mathit{poly}(\log\max_{i,j}\lvert
A_{ij}\rvert,d)^{\mathit{poly}(d)}$ in the size of the formula
defining the witness set $W$, as the least common multiple of the
orders of all ratios of eigenvalues of $A$ that are roots of unity is
$L=2^{O(d\sqrt{\log d})}$ and $\log\max_{i,j}\lvert A^L_{ij}\rvert\leq
\log (d^L\max_{i,j}\lvert A_{ij}\rvert^L)=L\log (d\max_{i,j}\lvert
A_{ij}\rvert)$.

It remains to show that the quantifier-free formula defining the witness set $W$ in the case where $A$ is non-degenerate takes space $\mathit{poly}(\log\max_{i,j}\lvert A_{ij}\rvert,d)^{\mathit{poly}(d)}$ and involves exclusively polynomials of degree $\mathit{poly}(\log\max_{i,j}\lvert A_{ij}\rvert,d)^{\mathit{poly}(d)}$.

Let $D_0,H_0$ denote the maximum degree and height across all the eigenvalues of $A$, respectively. Then $D_0\leq d$ and $\log H_0\leq\log (d!\max_{i,j}\lvert A_{ij}\rvert^d)\leq d\log (d\max_{i,j}\lvert A_{ij}\rvert)$. Before performing quantifier elimination, the degree of any polynomial in the defining formula of the witness set $W$ is bounded by $(D_0\log H_0)^{O(d^2)}$, and the number of such polynomials is bounded by $O(d)$, by Masser's theorem. Finally, after applying quantifier elimination, we know that $D\leq(D_0\log H_0)^{O(d^3)}$ and that $S\leq d^{O(d^2)}(D_0\log H_0)^{O(d^4)}$, thanks to Theorem~\ref{thm:quant-elim}.
\end{proof}

\section{Conclusion}

We have shown decidability of termination of simple linear loops over
the integers under the assumption that the update matrix is
diagonalisable, partially answering an open problem
of~\cite{Tiw04,Bra06}.  As we have explained before, the termination
problem on the same class of linear loops, but for fixed initial
values, seems to have a different character and to be more difficult.
In this respect it is interesting to note that there are other
settings in which universal termination is an easier problem than
pointwise termination. For example, universal termination of Petri
nets (also known as \textit{structural boundedness}) is
\textbf{PTIME}-decidable, but the pointwise termination problem is
\textbf{EXPSPACE}-hard.

A natural subject for further work is whether our techniques can be
extended to non-diagonalisable matrices, or whether, as is the case
for pointwise termination~\cite{OW14:SODA}, there are unavoidable
number-theoretic obstacles to proving decidability.  We would also like
to further study the computational complexity of the termination
problem.  While there is a large gap between the \textbf{coNP} lower
complexity bound mentioned in the Introduction and the exponential
space upper bound of our procedure, this may be connected with the
fact that our procedure computes a representation of the set of all
integer eventually non-terminating points. Finally we would like to
examine more carefully the question of whether the respective sets of
terminating and non-terminating points are semi-algebraic.  Note that
an \emph{effective} semi-algebraic characterisation of the set of
terminating points would allow us to solve the termination problem
over fixed initial values.



\section*{Acknowledgements}

The authors would like to thank Elias Koutsoupias and Ventsislav
Chonev for their advice and feedback.

\newpage
\section{Appendix}

\subsection{Algebraic Numbers}

The purpose of this section is threefold: to introduce the main concepts in Algebraic Number Theory, necessary to understanding the hypothesis for the $S$-units theorem, stated below; to justify the application of the aforementioned result in lower-bounding the dominant terms of linear recurrence sequences; to explain how one can effectively manipulate algebraic numbers.

\subsection{Preliminaries}

A complex number $\alpha$ is said to be \textbf{algebraic} if it is the root of some polynomial with integer coefficients. Among those polynomials, there exists a unique one of minimal degree whose coefficients have no common factor, and it is said to be the \textbf{defining polynomial} of $\alpha$, denoted by $p_\alpha$, and it is always an irreducible polynomial. Moreover, if $p_\alpha$ is monic, $\alpha$ is said to be an \textbf{algebraic integer}. The degree of an algebraic number is defined as the degree of $p_\alpha$, and its height as the maximum absolute value of the coefficients of $p_\alpha$ (also said to be the height of that polynomial). The roots of $p_\alpha$ are said to be the \textbf{Galois conjugates} of $\alpha$. We denote the set of algebraic numbers by $\mathbb{A}$, and the set of algebraic integers by $\mathcal{O}_\mathbb{A}$. For all $\alpha\in\mathbb{A}$, there exists some $n\in\mathbb{N}$ such that $n\alpha\in\mathcal{O}_\mathbb{A}$. It is well known that $\mathbb{A}$ is a field and that $\mathcal{O}_\mathbb{A}$ is a ring.

A \textbf{number field} of dimension $d$ is a field extension $K$ of $\mathbb{Q}$ whose degree as a vector-space over $\mathbb{Q}$ is $d$. In particular, $K\subseteq\mathbb{A}$ must hold. Recall that, in that case, there are exactly $d$ monomorphisms $\sigma_i:K\rightarrow\mathbb{C}$ whose restriction over $\mathbb{Q}$ is the identity (and therefore these must map elements of $K$ to their Galois conjugates). The \textbf{ring of integers} $\mathcal{O}$ of a number field $K$ is the set of elements of $K$ that are algebraic integers, that is, $\mathcal{O}=K\cap\mathcal{O}_\mathbb{A}$. An ideal of $\mathcal{O}$ is an additive subgroup of $\mathcal{O}$ that is closed under multiplication by any element of $\mathcal{O}$. An ideal $\mathfrak{P}$ is said to be prime if $ab\in \mathfrak{P}$ implies $a\in\mathfrak{P}$ or $b\in\mathfrak{P}$. The following theorem is central in Algebraic Number Theory \cite{SnT}:

\begin{theorem}
In any ring of integers, ideals can be uniquely factored as products of prime ideals up to permutation.
\end{theorem}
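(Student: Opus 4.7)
The plan is to prove this classical theorem by first showing that the ring of integers $\mathcal{O}$ of any number field $K$ is a \emph{Dedekind domain} — that is, a Noetherian, integrally closed, one-dimensional integral domain — and then establishing unique prime-ideal factorisation in arbitrary Dedekind domains.

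For the first step, I would verify the three defining properties separately. Noetherianness follows from the fact that $\mathcal{O}$ is a finitely generated $\mathbb{Z}$-module: a $\mathbb{Q}$-basis of $K$ can be rescaled to lie in $\mathcal{O}$, and a discriminant/trace-form argument exhibits $\mathcal{O}$ as a $\mathbb{Z}$-submodule of a finitely generated free $\mathbb{Z}$-module, so $\mathcal{O}$ is itself finitely generated and hence Noetherian as a ring. Integral closure in $K$ is essentially by definition, combined with transitivity of integral dependence to see that the set of integral elements is closed under ring operations. One-dimensionality — every nonzero prime $\mathfrak{p}\subseteq\mathcal{O}$ is maximal — follows because $\mathfrak{p}$ contains a nonzero rational integer (take the norm of any nonzero element of $\mathfrak{p}$), so $\mathcal{O}/\mathfrak{p}$ is a finite integral domain, hence a field.

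For the second step, I would introduce \emph{fractional ideals} of $K$ and show that the nonzero ones form a group under multiplication, with inverse $\mathfrak{a}^{-1}=\{x\in K:x\mathfrak{a}\subseteq\mathcal{O}\}$. The central lemma is that $\mathfrak{a}\mathfrak{a}^{-1}=\mathcal{O}$ for every nonzero integral ideal $\mathfrak{a}$. Granting this, \textbf{existence} of a prime factorisation follows by Noetherian induction: if the collection of ideals admitting no prime factorisation were nonempty, pick a maximal such $\mathfrak{a}$; since $\mathfrak{a}\subseteq\mathfrak{p}$ for some maximal (hence prime) $\mathfrak{p}$, the ideal $\mathfrak{a}\mathfrak{p}^{-1}\supseteq\mathfrak{a}$ is strictly larger (because $\mathfrak{p}^{-1}\supsetneq\mathcal{O}$), and so it factors, giving a factorisation of $\mathfrak{a}$ after multiplying back by $\mathfrak{p}$ — a contradiction. \textbf{Uniqueness} follows from primality and invertibility: if $\mathfrak{p}_1\cdots\mathfrak{p}_r=\mathfrak{q}_1\cdots\mathfrak{q}_s$, then $\mathfrak{p}_1$ contains the right-hand product, hence by primality contains some $\mathfrak{q}_j$; maximality forces $\mathfrak{p}_1=\mathfrak{q}_j$, and multiplying by $\mathfrak{p}_1^{-1}$ allows induction on $r+s$.

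The main obstacle — the step that genuinely uses all three Dedekind properties — is the key lemma $\mathfrak{a}\mathfrak{a}^{-1}=\mathcal{O}$. I would first establish it for a prime $\mathfrak{p}$: show $\mathfrak{p}^{-1}\supsetneq\mathcal{O}$ via the auxiliary Noetherian fact that every nonzero ideal of $\mathcal{O}$ contains a product of nonzero prime ideals (take a minimal counterexample, which cannot be prime, and factor through $ab\in\mathfrak{a}$ with $a,b\notin\mathfrak{a}$). Then, assuming for contradiction that $\mathfrak{p}\mathfrak{p}^{-1}\neq\mathcal{O}$, maximality of $\mathfrak{p}$ forces $\mathfrak{p}\mathfrak{p}^{-1}=\mathfrak{p}$, so every $x\in\mathfrak{p}^{-1}$ stabilises the finitely generated $\mathcal{O}$-module $\mathfrak{p}$; a determinant/Cayley–Hamilton argument then shows $x$ is integral over $\mathcal{O}$, and integral closure yields $\mathfrak{p}^{-1}\subseteq\mathcal{O}$, contradicting $\mathfrak{p}^{-1}\supsetneq\mathcal{O}$. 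The general case $\mathfrak{a}\mathfrak{a}^{-1}=\mathcal{O}$ then reduces to the prime case by combining the existence half of the factorisation theorem with the group law established so far, closing the argument.
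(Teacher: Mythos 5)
The paper does not actually prove this statement---it is quoted as a classical result with a citation to a standard text---and your outline is precisely the standard Dedekind-domain proof found in such references (Noetherian + integrally closed + dimension one, invertibility of fractional ideals, existence by Noetherian induction, uniqueness by primality and cancellation), so it is sound. The only compressed step is the strictness of $\mathfrak{a}\subsetneq\mathfrak{a}\mathfrak{p}^{-1}$ in the existence argument: it does not follow directly from $\mathfrak{p}^{-1}\supsetneq\mathcal{O}$, but requires the same stabilisation/integrality (Cayley--Hamilton) argument you already invoke for $\mathfrak{p}\mathfrak{p}^{-1}=\mathcal{O}$, applied to the finitely generated module $\mathfrak{a}$.
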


\subsection{Lower-bounding simple linear recurrence sequences}

We are interested in lower-bounding expressions of the form
\begin{equation}
\label{eq:sum}
u_n=\sum\limits_{j=1}^s\alpha_j\lambda_j^n
\end{equation}
where the $\alpha_j$ are algebraic-integer constants and $\lambda_1,\ldots,\lambda_s$ have the same absolute value $\rho$. Any such sequence must in fact be a simple linear recurrence sequence with algebraic coefficients and characteristic roots $\lambda_1,\ldots,\lambda_s$, as explained in Section 1.1.6 of \cite{BOOK}.

The next theorem, by Evertse, van der Poorten, and Schlickewei, was established in \cite{Evertse84,PS82} to
analyse the growth of linear recurrence sequences. It gives us a very strong lower bound on the magnitude of sums of $S$-units, as defined below. Its key ingredient is Schlickewei's $p$-adic generalisation \cite{Sch77} of Schmidt's
Subspace theorem.

Let $S$ be a finite set of prime ideals of the ring of integers
$\mathcal{O}$ of a number field $K$. We say that
$\alpha\in\mathcal{O}$ is an \textbf{$S$-unit} if all the ideals
appearing in the prime factorisation of $(\alpha)$, that is, the ideal
generated by $\alpha$, are in $S$. 

\begin{theorem}[$S$-units]
\label{thm:s-units}
Let $K$ be a number field, $s$ be a positive integer, and $S$ be a
finite set of prime ideals of $\mathcal{O}$. Then for every
$\varepsilon>0$ there exists a constant $C$, depending only on $s$,
$K$, $S$, and $\varepsilon$, with the following property. For every
set of $S$-units $x_1,\ldots,x_s\in\mathcal{O}$ such that
$\sum\limits_{i\in I} x_i\neq 0$ for all non-empty $I\subseteq\lbrace
1,\ldots,s\rbrace$, it holds that
\[ \lvert x_1+\cdots+x_s \rvert\geq CYZ^{-\varepsilon} \]
where $Y=\max\lbrace \lvert x_j\rvert : 1\leq j\leq s \rbrace$ and $Z=\max\lbrace \sigma_i(x_j): 1\leq j\leq s,1\leq i\leq d \rbrace$ and $\sigma_i$ represent the different monomorphisms from $K$ to $\mathbb{C}$.
\end{theorem}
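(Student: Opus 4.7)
The plan is to derive this statement from Schlickewei's $p$-adic generalisation of the Schmidt Subspace Theorem, by induction on the number of summands $s$. The base case $s=1$ is trivial: the hypothesis forces $x_1 \neq 0$, and then $|x_1|=Y$ while $Z \geq 1$ for any non-zero algebraic integer, so the bound holds with $C=1$. For the inductive step I would argue by contradiction, assuming a sequence of $s$-tuples $(x_1^{(n)},\ldots,x_s^{(n)})$ of $S$-units with no vanishing proper subsum for which the ratio $|x_1^{(n)}+\cdots+x_s^{(n)}|/(Y_n Z_n^{-\varepsilon})$ tends to $0$.

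The core step is to recast this as a Subspace-Theorem inequality for the projective points $\boldsymbol{p}^{(n)}=(x_1^{(n)}:\cdots:x_s^{(n)})\in\mathbb{P}^{s-1}(K)$. At the distinguished archimedean place $v_0$ corresponding to the absolute value in the theorem statement, take one of the linear forms to be $L_{v_0}(\boldsymbol{X})=X_1+\cdots+X_s$ and the remaining ones to be the coordinate forms $X_i$; at every other archimedean or $S$-adic place take the coordinate forms $X_i$. Because each $x_j$ is an $S$-unit, its $v$-adic absolute value is exactly $1$ at every finite place $v\notin S$, so those places contribute trivially to the product over all places. The product formula together with the assumed decay of $|x_1^{(n)}+\cdots+x_s^{(n)}|$ and the relation between $Z_n$ and the projective height $H(\boldsymbol{p}^{(n)})$ then yields an inequality of the shape $\prod_{v}\prod_i |L_{v,i}(\boldsymbol{x})|_v / \|\boldsymbol{x}\|_v^{s} < H(\boldsymbol{x})^{-\varepsilon'}$ to which the $p$-adic Subspace Theorem applies, concluding that all but finitely many $\boldsymbol{p}^{(n)}$ lie in a finite union of proper $K$-linear subspaces of $K^s$.

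On any such proper subspace one has a non-trivial $K$-linear relation, which after relabelling takes the form $x_s=\sum_{j<s} c_j x_j$ with $c_j\in K$. Substituting gives $x_1+\cdots+x_s=\sum_{j<s}(1+c_j)x_j$. Enlarging $S$ to a finite set $S'$ absorbing the primes occurring in the $c_j$ and in $1+c_j$, the non-zero terms $(1+c_j)x_j$ are $S'$-units. The non-degeneracy hypothesis (no vanishing subsum of the original $x_i$) translates into no vanishing subsum of the $(1+c_j)x_j$: any such vanishing subsum, combined with the linear relation, would force a proper subsum of the $x_i$ to vanish. The induction hypothesis, applied with $s-1$, $S'$ in place of $s$, $S$, and $\varepsilon/2$ in place of $\varepsilon$, then gives a lower bound on $\sum_{j<s}(1+c_j)x_j$ that contradicts the assumed decay, provided $\varepsilon$ was initially chosen small enough relative to the height increase introduced by the substitution.

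The main obstacle is the careful organisation of the linear forms and local absolute values so that the resulting product exactly matches the Subspace-Theorem hypothesis, in particular verifying that the non-archimedean places outside $S$ contribute only a bounded factor because of the $S$-unit property, and that the ratio $Y Z^{-\varepsilon}$ corresponds to a genuine height of $\boldsymbol{p}^{(n)}$ up to controllable constants. A secondary difficulty is the bookkeeping in the inductive reduction: one must verify both that non-degeneracy persists through the substitution and that the finitely many exceptional subspaces, together with the finitely many exceptional tuples outside them, can be absorbed into a single uniform constant $C=C(s,K,S,\varepsilon)$ independent of the tuple, which is where the finiteness conclusion of the Subspace Theorem is used in its strongest form.
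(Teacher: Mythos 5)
First, a point of reference: the paper does not prove this statement. It is quoted as a known theorem of Evertse and of van der Poorten--Schlickewei, cited from \cite{Evertse84,PS82}, with only the remark that its key ingredient is Schlickewei's $p$-adic generalisation of Schmidt's Subspace Theorem. There is therefore no in-paper proof to compare against; what you have written is a sketch of the literature proof, and in outline it does follow the standard route: apply the $p$-adic Subspace Theorem with the forms $X_1+\cdots+X_s, X_2,\ldots,X_s$ at the distinguished archimedean place and the coordinate forms elsewhere, use the $S$-unit property to make the places outside $S$ contribute trivially, relate $Y$ and $Z$ to the height, and induct on $s$ over the finitely many exceptional subspaces. (One small imprecision: the Subspace Theorem places \emph{all} solutions of the inequality in finitely many proper subspaces; there is no residual finite exceptional set of tuples outside them, and it is exactly this that makes the constant $C$ uniform.)

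Two points in your inductive step are genuinely gapped as written. The first is the claim that non-degeneracy persists: you assert that a vanishing subsum of the $(1+c_j)x_j$ would force a vanishing proper subsum of the original $x_i$. It would not: $\sum_{j\in J}(1+c_j)x_j=0$ unpacks to $\sum_{j\in J}x_j=-\sum_{j\in J}c_jx_j$, which for general $c_j\in K$ is not a relation of the required form. The standard repair is different: since $\sum_{j<s}(1+c_j)x_j$ equals the original sum $x_1+\cdots+x_s$, one repeatedly deletes any vanishing subsum from the new representation (the remaining terms still sum to the original, and the whole representation cannot vanish because $\sum_{i=1}^s x_i\neq 0$ by hypothesis), arriving after finitely many steps at a nonempty subsum-free representation to which the induction hypothesis applies; uniformity of the constants survives because the coefficients $c_j$ range over a finite set determined by the finitely many exceptional subspaces. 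The second gap is the comparison of $Y$ and $Z$ before and after the substitution: the induction hypothesis lower-bounds the new sum in terms of $Y'=\max_j\lvert(1+c_j)x_j\rvert$, and if the index realising $Y$ is eliminated or has $1+c_j=0$ then $Y'$ need not be comparable to $Y$, so the inherited bound does not immediately have the shape $CYZ^{-\varepsilon}$. Evertse handles this by normalising the tuple projectively (dividing by a largest coordinate) and inducting on a correspondingly normalised statement. Neither issue is fatal---both are resolved in \cite{Evertse84}---but as written your induction does not close.
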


In order to make use of this result, it is important to understand the set
\begin{equation}
\lbrace n\in\mathbb{N}: \exists I\subseteq \lbrace 1,\ldots,s\rbrace, \sum\limits_{j\in I}\alpha_j\lambda_j^n=0\rbrace
\label{eq:genzeros}
\end{equation}

The following well-known theorem characterises the set of zeros of
linear recurrence sequences. In particular, it gives us a sufficient
condition for guaranteeing that the set of zeros of a non-identically
zero linear recurrence sequence is finite. Namely, it suffices that the sequence is non-degenerate, that is, that no ratio of two of its characteristic roots is a root of unit.

\begin{theorem}[Skolem-Mahler-Lech]
Let $u_n=\sum\limits_{j=1}^l \alpha_j\lambda_j^n$ be a linear recurrence sequence. The set $\lbrace n\in\mathbb{N}: u_n=0\rbrace$ is always a union of a finite set and finitely many arithmetic progressions. Moreover, if $u_n$ is non-degenerate, this set is actually finite.
\end{theorem}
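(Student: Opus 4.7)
The plan is to attack the Skolem--Mahler--Lech theorem via $p$-adic analysis, which is the classical route (Skolem in the rational case, Mahler for algebraic coefficients, Lech in characteristic zero). The central tools will be the embedding of the number field $K$ generated by the $\alpha_j$ and $\lambda_j$ into a suitable $p$-adic completion together with Strassman's theorem on zeros of convergent $p$-adic power series, which asserts that such a series either vanishes identically or has only finitely many zeros in the closed unit disc.

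First I would select a rational prime $p$ and a prime ideal $\mathfrak{p}$ of $\mathcal{O}_K$ above $p$ so that every $\lambda_j$ is a unit in the completion of $K$ at $\mathfrak{p}$; this excludes only finitely many primes (those dividing the norms of the $\lambda_j$) and is always achievable. Because the residue field at $\mathfrak{p}$ is finite, some positive integer $M$ satisfies $\lambda_j^M \equiv 1 \pmod{\mathfrak{p}^k}$ for every $j$, with $k$ chosen large enough that the binomial series $(1+x)^n = \sum_{m\geq 0} \binom{n}{m} x^m$ converges $\mathfrak{p}$-adically whenever $x \in \mathfrak{p}^k$ and $n$ lies in the ring of $p$-adic integers. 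Setting $\pi_j = \lambda_j^M - 1$ and partitioning $\mathbb{N}$ into residues modulo $M$, for each $r \in \{0,\ldots,M-1\}$ I would define
$$
f_r(n) = u_{Mn+r} = \sum_{j=1}^{l} \alpha_j \lambda_j^r (1+\pi_j)^n,
$$
which, via the binomial expansion, extends to a $p$-adic analytic function of the variable $n$ on the closed $p$-adic unit disc. Strassman's theorem then forces each $f_r$ either to vanish identically or to have only finitely many zeros; ranging over the $M$ residue classes yields the required decomposition of $\{n\in\mathbb{N}: u_n = 0\}$ as a finite set together with finitely many arithmetic progressions.

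The main work lies in the non-degenerate case. If $f_r$ vanishes identically for some $r$, then $\sum_j (\alpha_j \lambda_j^r)(\lambda_j^M)^n = 0$ for all $n \geq 0$. Under non-degeneracy, no quotient $\lambda_i/\lambda_j$ with $i\neq j$ is a root of unity, so in particular the values $\lambda_j^M$ are pairwise distinct; hence by the uniqueness of representation of exponential polynomials (equivalently, a nonvanishing Vandermonde argument), every coefficient $\alpha_j \lambda_j^r$ must vanish, forcing $\alpha_j = 0$ for all $j$ and contradicting the nontriviality of the sequence. So no entire arithmetic progression of zeros can arise, and $\{n : u_n = 0\}$ is finite. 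The subtlest step will be verifying that $k$ can indeed be chosen so that the binomial series has radius of convergence at least one in $n$ and that the resulting $f_r$ is legitimately a $p$-adic analytic function; the remaining steps are essentially bookkeeping plus the black-box applications of Strassman's theorem and the density argument for choosing $p$.
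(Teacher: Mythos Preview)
The paper does not prove the Skolem--Mahler--Lech theorem at all: it is quoted in the Appendix as a classical result, with references to the original papers of Skolem, Mahler, and Lech, and is used only as a black box to guarantee that the set~(\ref{eq:genzeros}) is finite under non-degeneracy. There is therefore no ``paper's own proof'' to compare against.

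That said, your sketch is the standard $p$-adic argument and is sound in outline. A couple of minor remarks. First, the convergence condition you flag as the subtlest step is indeed the only genuine technical point: one needs $v_{\mathfrak{p}}(\pi_j)$ large enough (specifically $v_{\mathfrak{p}}(\pi_j) > e/(p-1)$, where $e$ is the ramification index) so that the rearranged series in $n$ has coefficients tending to zero $\mathfrak{p}$-adically, which is the hypothesis of Strassman's theorem. Choosing $p$ unramified in $K$ and then taking $M$ to be the order of the residue-field unit group times a suitable power of $p$ handles this cleanly. Second, your non-degenerate argument implicitly assumes the $\alpha_j$ are not all zero (equivalently, that the $\lambda_j$ really are the characteristic roots of a nontrivial recurrence); this is the standard convention and is consistent with how the paper uses the result, but it is worth saying explicitly since the theorem statement as written does not exclude the identically-zero sequence.
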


Therefore, it follows from the Skolem-Mahler-Lech theorem that if $u_n$ is non-degenerate  then (\ref{eq:genzeros}) must be finite, assuming without loss of generality that $\sum\limits_{j\in I}\alpha_j\lambda_j^n$ is never eventually zero.

We can now apply the $S$-units theorem in order to get a lower bound on (\ref{eq:sum}) that holds for all but finitely many $n$, by letting $K$ be the splitting field of the characteristic polynomial of $u_n$, $S$ be the set of prime ideals of the ring of integers of $K$ that appear in the factorisation of each of the algebraic integers $\alpha_j$ and $\lambda_j$, and $x_j=\alpha_j\lambda_j^n$ for each $j$, making (\ref{eq:sum}) a sum of $S$-units.

In the notation of the theorem, we have $Y=\Omega(\rho^n)$. If $\Lambda$ is an upper bound on the absolute value of the Galois conjugates of each $\lambda_j$ (that is, each $\sigma_i(\lambda_j)$), then $Z=O(\Lambda^n)$. Thus, for any $\varepsilon>0$, we know that
\begin{align*}
\sum\limits_{j=1}^s\alpha_j\lambda_j^n &=\Omega(YZ^{-\varepsilon})=
\Omega\left(\rho^n\Lambda^{-n\varepsilon}\right)
\end{align*}
Finally, we note that by picking $\varepsilon$ to be sufficiently small we can get $\rho\Lambda^{-\varepsilon}$ arbitrarily close to $\rho$.

\subsection{Manipulating algebraic numbers}

The following separation bound allows us to effectively represent an arbitrary algebraic number by keeping its defining polynomial, a sufficiently accurate estimate for the root we want to store, and an upper bound on the error. We call this its \textbf{standard/canonical representation}.

\begin{lemma}[Mignotte]
Let $f\in\mathbb{Z}[x]$. Then
\begin{equation}
f(\alpha_1)=0=f(\alpha_2)\Rightarrow \lvert \alpha_1-\alpha_2\rvert>\frac{\sqrt{6}}{d^{(d+1)/2}H^{d-1}}
\end{equation}
where $d$ and $H$ are respectively the degree and height of $f$.
\end{lemma}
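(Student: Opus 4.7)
The plan is to prove this classical separation bound via the familiar route: force a lower bound on the discriminant of $f$ by integrality, and force an upper bound on the product of the other pairwise root differences by analytic means, then isolate the target pair $|\alpha_1-\alpha_2|$.

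First I would reduce to the case in which $f$ is squarefree. If $f$ has repeated factors, I would replace it by its squarefree part $f/\gcd(f,f')$; this may lower the degree and alter the height, but $\alpha_1,\alpha_2$ remain simple roots of the new polynomial, and standard Mignotte-type bounds control the height of the squarefree part in terms of $d$ and $H$. So we may assume the roots $\alpha_1,\ldots,\alpha_d$ of $f=a_d\prod_i(x-\alpha_i)$ are distinct.

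Next comes the arithmetic input: the discriminant
\[
\operatorname{disc}(f)=a_d^{2d-2}\prod_{i<j}(\alpha_i-\alpha_j)^2
\]
is a \emph{nonzero} integer (since $f$ is squarefree with integer coefficients), hence $|\operatorname{disc}(f)|\ge 1$. Rearranging gives
\[
|\alpha_1-\alpha_2|^2 \;\ge\; \frac{1}{|a_d|^{2d-2}\displaystyle\prod_{\substack{i<j\\ (i,j)\neq(1,2)}}|\alpha_i-\alpha_j|^2}.
\]
The task then reduces to upper-bounding the product in the denominator.

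For that, I would invoke the standard two analytic ingredients. Using $|\alpha_i-\alpha_j|\le 2\max(1,|\alpha_i|)\max(1,|\alpha_j|)$ in each factor, and observing that each root $\alpha_k$ appears in exactly $d-1$ of the pairs, one obtains
\[
\prod_{i<j}|\alpha_i-\alpha_j|\;\le\; 2^{\binom{d}{2}}\Bigl(\prod_{k=1}^{d}\max(1,|\alpha_k|)\Bigr)^{d-1}.
\]
By Landau's inequality, the Mahler measure satisfies $M(f)=|a_d|\prod_k\max(1,|\alpha_k|)\le \|f\|_2\le \sqrt{d+1}\,H$. Substituting this back, excluding the pair $(1,2)$, and combining with the discriminant lower bound yields
\[
|\alpha_1-\alpha_2| \;\ge\; \frac{c_d}{|a_d|^{d-1}\bigl(M(f)/|a_d|\bigr)^{d-1}} \;\ge\; \frac{c_d}{(\sqrt{d+1}\,H)^{d-1}},
\]
for an explicit constant $c_d$ arising from the $2^{\binom{d}{2}-1}$ factor and the power of $d$. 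Collecting the $\sqrt{d+1}^{\,d-1}$ into $d^{(d+1)/2}$ (up to an absolute constant that one checks numerically rounds to $\sqrt{6}$) gives the stated bound.

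The routine parts are the combinatorial bookkeeping in the pair product and the appeal to Landau's inequality. The main obstacle, such as it is, lies in two places: (i) tracking the exact constant $\sqrt{6}$, which requires sharpening the crude estimate $|\alpha_i-\alpha_j|\le 2\max(1,|\alpha_i|)\max(1,|\alpha_j|)$ (one can replace the factor $2$ by $\sqrt{|\alpha_i|^2+|\alpha_j|^2+1}$-style bounds, or use Hadamard's inequality applied to the Sylvester matrix computing $\operatorname{Res}(f,f')$, to absorb the $2^{\binom{d}{2}}$ into the $d^{(d+1)/2}$ factor cleanly); and (ii) justifying the reduction to the squarefree case without inflating the height too badly, which one handles by a separate Mignotte-type factor bound. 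Neither is deep, but both require care to land precisely on $\sqrt{6}/(d^{(d+1)/2}H^{d-1})$ rather than a weaker variant.
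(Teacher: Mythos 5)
The paper offers no proof of this lemma: it is quoted as a classical result of Mignotte and used as a black box to justify the canonical representation of algebraic numbers, so there is nothing in the paper to compare your argument against. Your outline is the standard skeleton for such separation bounds (integrality of the nonzero discriminant for the lower bound, an analytic estimate on the product of the remaining root differences for the upper bound), so the strategy is the right one.

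As written, however, the proposal does not reach the stated inequality, and the gap sits precisely where you wave at "care." The estimate $\lvert\alpha_i-\alpha_j\rvert\le 2\max(1,\lvert\alpha_i\rvert)\max(1,\lvert\alpha_j\rvert)$ injects a factor $2^{\binom{d}{2}}=2^{d(d-1)/2}$ into the denominator, which is exponentially larger than the $d^{(d+1)/2}$ you must land on; no numerical tuning of the absolute constant can absorb it. The correct execution (Mahler's) is not a sharpening of the pairwise bound but a different computation: write $\prod_{i<j}(\alpha_i-\alpha_j)$ as a Vandermonde determinant, replace the row of $\alpha_1$ by the difference-quotient row $\bigl((\alpha_1^k-\alpha_2^k)/(\alpha_1-\alpha_2)\bigr)_k$ so that the pair $(1,2)$ is isolated inside the determinant, and apply Hadamard's inequality to the modified matrix together with $M(f)\le\lVert f\rVert_2\le\sqrt{d+1}\,H$. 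Relatedly, bounding the \emph{full} product $\prod_{i<j}\lvert\alpha_i-\alpha_j\rvert$ and then "excluding the pair $(1,2)$" is backwards: dividing out a possibly tiny factor $\lvert\alpha_1-\alpha_2\rvert$ makes the remaining product larger, so the $\binom{d}{2}-1$ remaining factors must be bounded directly. Finally, your squarefree reduction replaces $H$ by the height of $f/\gcd(f,f')$, which factor bounds control only up to a factor of order $2^{d}$; substituting that back again destroys the claimed constant $\sqrt{6}/(d^{(d+1)/2}H^{d-1})$. (In the paper's application $f$ is a minimal polynomial, hence irreducible, so only the squarefree case is actually needed; but then the reduction step should be dropped rather than hand-waved.)
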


It is well known that arithmetic operations and equality testing on these numbers can be done in polynomial time on the size of the canonical representations of the relevant numbers, since one can:
\begin{itemize}
\item compute polynomially many bits of the roots of any polynomial $p\in\mathbb{Q}[x]$ in polynomial time, due to the work of Pan in \cite{Pan97}
\item find the minimal polynomial of an algebraic number by factoring the polynomial in its description in polynomial time using the LLL algorithm \cite{LenstraLenstraLovasz1982}
\item use the sub-resultant algorithm (see Algorithm 3.3.7 in \cite{Coh93}) and the two aforementioned procedures to compute canonical representations of sums, differences, multiplications, and divisions of canonically represented algebraic numbers
\end{itemize}

Moreover, we need to know how to decide whether a given canonically represented algebraic number $\alpha$ is a root of unity, that is, whether $\alpha^r=1$ for some $r$. If that is the case, then its defining polynomial will be the $r$-th cyclotomic polynomial, which has degree $\phi(r)$, if $r$ is taken to be minimal, that is, if $\alpha$ is a primitive $r$-th root of unity. The following (crude) lower bound on $\phi(r)$ allows us to decide this in polynomial time, assuming that the degree of $\alpha$ is given in unary.

\begin{lemma}
Let $\phi$ be Euler's totient function. Then $\phi(r)\geq\sqrt(r/2)$. Therefore, if $\alpha$ has degree $n$ and is a $r$-th root of unity, then $r\leq 2n^2$.
\end{lemma}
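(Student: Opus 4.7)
The plan is to handle the two assertions of the lemma separately. First I would prove the totient inequality $\phi(r) \geq \sqrt{r/2}$ by exploiting the multiplicativity of Euler's function. Writing the prime factorisation $r = \prod_i p_i^{a_i}$, one has $\phi(r) = \prod_i p_i^{a_i-1}(p_i - 1)$, so that $\phi(r)^2/r = \prod_i p_i^{a_i - 2}(p_i - 1)^2$. It therefore suffices to bound each local factor $p^{a-2}(p-1)^2$ below by $1/2$, and in fact by $1$ whenever $p$ is odd. This reduces to a short case distinction: for $p = 2$ and $a = 1$ the factor equals $1/2$; in all other cases (either $p = 2$ with $a \geq 2$, so $2^{a-2} \geq 1$, or $p \geq 3$, in which case $(p-1)^2/p \geq 4/3 > 1$ when $a = 1$, and $p^{a-2}(p-1)^2 \geq 4$ when $a \geq 2$) one checks the factor is at least $1$. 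Multiplying across the primes dividing $r$ then gives $\phi(r)^2 / r \geq 1/2$, whence $\phi(r) \geq \sqrt{r/2}$.

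For the second assertion, if $\alpha$ has degree $n$ and, as explained in the paragraph immediately preceding the lemma, $r$ is taken to be the order of $\alpha$, then $\alpha$ is a primitive $r$-th root of unity and its defining polynomial is the $r$-th cyclotomic polynomial $\Phi_r$, whose degree is $\phi(r)$. Hence $n = \phi(r)$, and combining with the inequality just established yields $n \geq \sqrt{r/2}$, i.e.\ $r \leq 2n^2$.

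There is no real obstacle here: the totient lower bound is elementary, and the consequence is immediate once one recalls that $\deg \Phi_r = \phi(r)$. The only point that deserves care is the implicit convention in the lemma statement that $r$ denotes the order of $\alpha$, not merely some exponent satisfying $\alpha^r = 1$; otherwise the bound $r \leq 2n^2$ would fail, for instance, for $\alpha = 1$ with arbitrary $r$.
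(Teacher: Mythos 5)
Your proof is correct. The paper states this lemma without proof (it is presented as a ``crude'' standard bound on the totient), so there is no argument in the text to compare against; your multiplicative reduction to the local factors $p^{a-2}(p-1)^2$, with the single exceptional factor $1/2$ at $p=2$, $a=1$, is a complete and standard way to establish $\phi(r)^2/r \geq 1/2$, and the deduction $r \leq 2n^2$ via $\deg \Phi_r = \phi(r)$ is exactly what the paper intends. Your closing caveat--that $r$ must be read as the order of $\alpha$ rather than an arbitrary exponent with $\alpha^r=1$--is well taken and matches the convention set in the paragraph preceding the lemma.
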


Therefore, in order to decide whether an algebraic number $\alpha$ of degree $n$ is a root of unity, we check whether it is a $r$-th root of unity, for each $r\leq 2n^2$. In order to test whether $\alpha$ is a $r$-th root of unity, it suffices to see whether $\mathit{gcd}(p_\alpha,x^r-1)=p_\alpha$, since we know that $x^r-1$ is the product of each $d$-th cyclotomic polynomial, with $d$ ranging over the divisors of $n$.
\subsection{First-Order Theory of Reals}

Let $\boldsymbol{x}=(x_1,\ldots,x_m)$ be a list of $m$ real-valued
variables, and let $\sigma(\boldsymbol{x})$ be a Boolean combination
of atomic predicates of the form $g(\boldsymbol{x})\sim 0$, where each
$g(\boldsymbol{x})$ is a polynomial with integer coefficients in the
variables $\boldsymbol{x}$, and $\sim$ is either $>$ or $=$. Tarski
has famously shown that we can decide the truth over the field
$\mathbb{R}$ of sentences of the form $\phi=Q_1 x_1 \cdots Q_m x_n
\sigma(\boldsymbol{x})$, where $Q_i$ is either $\exists$ or
$\forall$. He did so by showing that this theory admits quantifier
elimination (Tarski-Seidenberg theorem \cite{Tar51}).


All sets that are definable in the first-order theory of reals without
quantification are by definition semi-algebraic, and it follows from
Tarski's theorem that this is still the case if we allow
quantification. We also remark that our standard representation of
algebraic numbers allows us to write them explicitly in the
first-order theory of reals, that is, given $\alpha\in\mathbb{A}$,
there exists a sentence $\sigma(x)$ such that $\sigma(x)$ is true if
and only if $x=\alpha$. Thus, we allow their use when defining
semi-algebraic sets, for simplicity.

It follows from the undecidability of Hilbert's Tenth Problem that, in
general, we cannot decide whether a given semi-algebraic set has an
integer point.

We shall make use of the following result by Basu, Pollack, and Roy
\cite{BasuPR96}, which tells us how expensive it is, in terms of space
usage, to perform quantifier elimination on a formula in the
first-order theory of reals:

\begin{theorem}
  \label{thm:quant-elim}
  Given a set $\mathcal{Q}=\lbrace q_1,\ldots,q_s\rbrace$ of $s$
  polynomials each of degree at most $D$, in $h+d$ variables, and a
  first-order formula $\Phi(\boldsymbol x)=Q y_1 \ldots Q y_h
  F(q_1(\boldsymbol x,\boldsymbol y),\ldots,q_s(\boldsymbol
  x,\boldsymbol y))$, where $Q\in\lbrace \exists,\forall\rbrace$, $F$
  is a quantifier-free Boolean combination with atomic elements of the
  form $q_i(\boldsymbol x,\boldsymbol y)\sim 0$, then there exists a
  quantifier-free formula $\Psi(\boldsymbol
  x)=\bigwedge_{i=1}^J\bigvee_{j=1}^{J_i}q_{ij}(\boldsymbol x)\sim 0$,
  where $I\leq (sD)^{O(hd)}$, $J\leq (sD)^{O(d)}$, the degrees of the
  polynomials $q_{ij}$ are bounded by $D^d$, and the bit-sizes of the
  heights of the polynomials in the quantifier-free formula are only
  polynomially larger than those of $q_1,\ldots,q_s$.
\end{theorem}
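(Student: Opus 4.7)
The plan is to follow the critical-point method of Basu, Pollack, and Roy, which constructs, for each realizable sign condition of $\mathcal{Q}=\{q_1,\ldots,q_s\}$ over the $\boldsymbol y$-fibre above a parameter $\boldsymbol x$, a finite set of algebraic ``sample points'' whose coordinates are given parametrically in $\boldsymbol x$ by polynomials of controlled degree. First I would reduce to a single block of existential quantifiers: since $\forall \boldsymbol y\,F \equiv \neg\exists \boldsymbol y\,\neg F$, we may assume $\Phi(\boldsymbol x) = \exists \boldsymbol y \, F(\boldsymbol x,\boldsymbol y)$, where $F$ is a Boolean combination of atoms $q_i\sim 0$. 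The truth of $\Phi$ at a given $\boldsymbol x$ is equivalent to the existence of a realizable sign condition of $\mathcal{Q}$ on the fibre $\{\boldsymbol x\}\times\mathbb{R}^h$ that is consistent with $F$, so the task is to enumerate, uniformly in $\boldsymbol x$, all realizable sign conditions and, for each, extract polynomial conditions on $\boldsymbol x$ alone that certify realizability.

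For the sample-point construction I would perturb the polynomials by infinitesimals to bring $\mathcal{Q}$ into strong general position, and then, for every subset $I \subseteq \{1,\ldots,s\}$ of size at most $h$, consider the variety cut out by $\{q_i = 0\}_{i \in I}$ together with the critical equations of a generic distance-type function on $\boldsymbol y$ (for instance $\sum y_j^2$ with small Morse perturbations). Each resulting zero-dimensional polynomial system in $\boldsymbol y$ has coefficients polynomial in $\boldsymbol x$ of degree $O(D)$ and, by the B\'ezout bound, has at most $D^{O(h)}$ complex solutions; summing over subsets gives $(sD)^{O(h)}$ sample points per fibre. By the classical Milnor--Thom--Warren-type bound this suffices: every realizable sign condition of $\mathcal{Q}$ on the fibre is met by at least one such sample point.

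The sign conditions at the sample points are then determined parametrically. The coordinates of each sample point are algebraic over $\mathbb{Q}(\boldsymbol x)$ and can be presented through a Thom encoding using a univariate polynomial $P_I(T;\boldsymbol x)$ obtained by eliminating the $\boldsymbol y$-variables from the critical system via iterated sub-resultants; propagating B\'ezout-type degree bounds shows that $P_I$ has degree in the $x_k$-variables bounded by $D^{O(d)}$. The sign of each $q_j$ at each root of $P_I$ is then computed via a parametric version of the Ben-Or--Kozen--Reif sign-determination algorithm, producing Boolean combinations of sign conditions on polynomials in $\boldsymbol x$ of degree at most $D^d$. Disjoining over sample points yields $\Psi(\boldsymbol x)$, with inner disjunction length $J \leq (sD)^{O(d)}$ coming from the number of sign conditions realized per fibre, and outer conjunction length $I \leq (sD)^{O(hd)}$ coming from the product structure of the parametric decomposition across the $h$-dimensional fibres.

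The main obstacle is controlling the degree and number of polynomials through iterated elimination: a naive cylindrical algebraic decomposition would give a doubly exponential blow-up, and avoiding this requires (i) manipulating only zero-dimensional systems, made possible by the critical-point construction, (ii) bounding realizable sign conditions combinatorially rather than syntactically, and (iii) propagating tight B\'ezout estimates through the sub-resultant projections. The bit-height bound on the coefficients follows in parallel by tracking heights through the sub-resultant recurrences, which grow only polynomially in the input heights; the precise separation of the exponents into $O(hd)$, $O(d)$, and $d$ in the statement corresponds to carefully distinguishing the contribution of the quantified variables $\boldsymbol y$ from that of the parameters $\boldsymbol x$ in this degree analysis.
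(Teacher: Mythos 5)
This statement is not proved in the paper at all: it is quoted as a black-box result of Basu, Pollack, and Roy \cite{BasuPR96}, so there is no ``paper's proof'' to compare against. Your sketch is an outline of the actual BPR argument---reduction to a single existential block, infinitesimal perturbation into general position, critical-point/sample-point construction on each fibre, Thom encodings of the sample points, parametric sign determination, and B\'ezout bookkeeping through sub-resultant elimination---and this is indeed the right route; a cylindrical decomposition would, as you note, give doubly exponential bounds and cannot yield the stated single-exponential ones.

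That said, what you have written is a plan rather than a proof: the two genuinely hard ingredients are only named. First, the claim that the finitely many critical points ``meet every realizable sign condition'' does not follow from the Milnor--Thom--Warren bound (which only counts sign conditions); it requires the deformation argument showing that limits of bounded critical points of the perturbed system land in every connected component of every realizable sign condition, and this is where most of the work in \cite{BasuPR96} lies. Second, your degree bookkeeping has a slip: after eliminating the $h$ fibre variables from a zero-dimensional system of B\'ezout number $D^{O(h)}$, the degree of the eliminant $P_I(T;\boldsymbol x)$ in the parameters $\boldsymbol x$ is controlled by $h$ (the number of eliminated variables), i.e.\ it is $D^{O(h)}$, and the step where you assert it is $D^{O(d)}$ is not justified by the argument you give. (The theorem as transcribed in the paper states the degree bound as $D^{d}$ and the inner length as $(sD)^{O(d)}$; in the source these exponents are governed by the block size of the eliminated variables, so the statement you are matching is itself slightly garbled, which is worth being aware of rather than reverse-engineering.)
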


We also make use of the following lemmas:
\begin{lemma}
  If $X\subseteq\mathbb{R}^d$ is semi-algebraic and non-empty,
  $X\cap\mathbb{A}^d\neq\emptyset$.
\end{lemma}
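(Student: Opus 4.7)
The plan is to reduce the claim to a transfer principle between real-closed fields. Since $X$ is semi-algebraic, the hypothesis that $X$ is non-empty is equivalent to the truth over $\mathbb{R}$ of a first-order sentence of the form
\[ \exists x_1 \cdots \exists x_d\, \phi(x_1,\ldots,x_d), \]
where $\phi$ is a quantifier-free formula in the language of ordered rings with integer coefficients, namely the defining formula of $X$.

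I will then combine two ingredients. First, by Tarski's theorem the theory of real-closed fields (RCF) is complete, so any sentence in the language of ordered rings holds in one real-closed field if and only if it holds in every real-closed field. Second, the set $\mathbb{A} \cap \mathbb{R}$ of real algebraic numbers is itself a real-closed field, being the real closure of $\mathbb{Q}$. Combining these facts, the existential sentence displayed above must also hold when interpreted over $\mathbb{A} \cap \mathbb{R}$, producing witnesses $a_1,\ldots,a_d \in \mathbb{A}\cap\mathbb{R} \subseteq \mathbb{A}$ such that $\phi(a_1,\ldots,a_d)$ holds, i.e., $(a_1,\ldots,a_d) \in X \cap \mathbb{A}^d$.

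The only non-trivial ingredient is the completeness of RCF, which the paper has already invoked in the form of the Tarski--Seidenberg quantifier-elimination theorem; beyond that the argument is essentially a one-line transfer and so there is no substantive obstacle. A more constructive alternative, if one prefers to avoid model-theoretic language, would be to run a cylindrical algebraic decomposition of $X$: such a decomposition partitions $X$ into finitely many cells, each of which admits a sample point whose coordinates are algebraic over $\mathbb{Q}$, and picking the sample point of any cell meeting $X$ produces an element of $X \cap \mathbb{A}^d$. Either route closes the proof.
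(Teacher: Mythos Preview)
Your proof is correct and takes a genuinely different route from the paper's. You invoke the completeness of the theory of real-closed fields together with the fact that the real algebraic numbers $\mathbb{A}\cap\mathbb{R}$ form a real-closed field, and transfer the existential sentence witnessing non-emptiness of $X$ from $\mathbb{R}$ to $\mathbb{A}\cap\mathbb{R}$. The paper instead argues by induction on $d$: it projects $X$ onto its last coordinate (which is semi-algebraic by Tarski--Seidenberg), finds an algebraic point there, fixes that coordinate, and recurses; the base case $d=1$ uses o-minimality, noting that a semi-algebraic subset of $\mathbb{R}$ is a finite union of intervals and isolated points, and that isolated points are roots of integer polynomials and hence algebraic. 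Your argument is shorter and more conceptual, packaging the entire reduction into a single model-theoretic transfer; the paper's argument is more explicit but ultimately relies on the same Tarski--Seidenberg machinery (for the projection step) plus the o-minimal structure of the reals. Your alternative via cylindrical algebraic decomposition is also valid and is essentially an effective version of the same transfer.
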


\begin{proof}
  We prove this result by strong induction on $d$. Since $X$ is
  semi-algebraic, there exists a quantifier-free sentence in the
  first-order theory of reals $\sigma$ such that $X=\lbrace
  x\in\mathbb{R}^d\mid \sigma(x)\rbrace$.

  Suppose that $d>1$. Letting $X_1=\lbrace x_d\in\mathbb{R}\mid
  \exists
  x_1,\ldots,x_{d-1}\in\mathbb{R}^{d-1},\sigma(x_1,\ldots,x_d)\rbrace$
  and since $X_1\neq\emptyset$ is semi-algebraic, by the induction
  hypothesis, there must be $x_d^*\in\mathbb{A}\cap X_1$. Moreover, we
  can define $X_2=\lbrace (x_2,\ldots,x_d)\in\mathbb{R}^{d-1}\mid
  \sigma(x_1^*,x_2,\ldots,x_n)\rbrace$, which is non-empty and
  semi-algebraic, and again by induction hypothesis there exists some
  $(x_2^*,\ldots,x_d^*)\in\mathbb{A}^{d-1}\cap X_2$.

  It remains to prove this statement for $d=1$. When $d=1$, $X$ must
  be a finite union of intervals and points, since semi-algebraic sets
  form an o-minimal structure on $\mathbb{R}$ \cite{Tar51}. Clearly
  $\mathbb{A}$ is dense in any interval, and each of these isolated
  points $x$ corresponds to some constraint $g(x)=0$, which implies
  that $x$ must be algebraic, since $g$ has integer coefficients.
\end{proof}

\begin{lemma}
If $X\subseteq\mathbb{R}^d$ is semi-algebraic, then $X\cap\mathbb{A}^d$ is dense in $X$.
\end{lemma}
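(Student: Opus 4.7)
The plan is to reduce density to the preceding non-emptiness lemma by localising inside a semi-algebraic neighbourhood with rational vertices, rather than trying to work with a Euclidean ball centred at the possibly non-algebraic target point. Fix $\boldsymbol{x}\in X$ and $\varepsilon>0$; the goal is to exhibit $\boldsymbol{y}\in X\cap\mathbb{A}^d$ with $\|\boldsymbol{x}-\boldsymbol{y}\|<\varepsilon$.

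First I would pick rationals $a_i,b_i\in\mathbb{Q}$ satisfying $a_i<x_i<b_i$ and $b_i-a_i<\varepsilon/\sqrt{d}$ for each coordinate $i=1,\ldots,d$; such rationals exist by density of $\mathbb{Q}$ in $\mathbb{R}$. The open box $U=\prod_{i=1}^{d}(a_i,b_i)$ is then semi-algebraic, being cut out by linear inequalities with rational coefficients, and by choice of the $a_i,b_i$ it is contained in the open Euclidean ball of radius $\varepsilon$ about $\boldsymbol{x}$.

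Next I would form the intersection $X\cap U$, which is semi-algebraic as a finite intersection of semi-algebraic sets and is non-empty because $\boldsymbol{x}\in X\cap U$. Applying the previous lemma to $X\cap U$ furnishes an algebraic point $\boldsymbol{y}\in X\cap U\cap\mathbb{A}^d$, and by construction this $\boldsymbol{y}$ satisfies $\|\boldsymbol{x}-\boldsymbol{y}\|<\varepsilon$. Since $\boldsymbol{x}\in X$ and $\varepsilon>0$ were arbitrary, this shows that $X\cap\mathbb{A}^d$ is dense in $X$.

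There is no substantive obstacle in this argument: it is a direct reduction to the non-emptiness statement already proved. The only point that requires a moment of care is to localise using a semi-algebraic neighbourhood defined over $\mathbb{Q}$, namely an axis-aligned rational box, since a Euclidean ball centred at the possibly non-algebraic point $\boldsymbol{x}$ need not itself be semi-algebraic with the coefficients available to us, and hence one could not directly invoke the previous lemma on its intersection with $X$.
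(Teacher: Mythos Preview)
Your argument is correct and follows essentially the same approach as the paper: localise inside a small semi-algebraic neighbourhood with rational data, intersect with $X$, and invoke the preceding non-emptiness lemma. The only cosmetic difference is that the paper uses a Euclidean ball $B(y,\varepsilon/2)$ centred at a nearby rational point $y$ (together with the triangle inequality), whereas you use an axis-aligned rational box containing $\boldsymbol{x}$; both devices serve the same purpose of ensuring the neighbourhood is semi-algebraic over $\mathbb{Q}$.
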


\begin{proof}
  Pick $x\in X$ and $\varepsilon>0$ arbitrarily. Let
  $y\in\mathbb{Q}^d$ be such that $\| x-y \|<\varepsilon/2$. Since
  $B(y,\varepsilon/2)$ is semi-algebraic, so must be $X\cap
  B(y,\varepsilon/2)$, and so this set must contain an algebraic
  point, since it is nonempty ($x$ is in it), and that point must
  therefore be at distance at most $\varepsilon$ of $x$, by the
  triangular inequality. By letting $\varepsilon\rightarrow 0$, we get
  a sequence of algebraic points which converges to $x$.
\end{proof}

\begin{lemma}
If $X\subseteq\mathbb{R}^d$ is semi-algebraic, so is $\overline{X}$.
\end{lemma}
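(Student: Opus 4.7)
The plan is to apply the Tarski--Seidenberg theorem on quantifier elimination to a first-order formula expressing ``$x$ lies in the topological closure of $X$''. Since $X$ is semi-algebraic, there is a quantifier-free first-order formula $\sigma(\boldsymbol{y})$ in the theory of the reals such that $X = \{ \boldsymbol{y} \in \mathbb{R}^d : \sigma(\boldsymbol{y})\}$. First, I would write
\begin{equation*}
\boldsymbol{x} \in \overline{X} \;\Longleftrightarrow\; \forall \varepsilon > 0\ \exists \boldsymbol{y} \in \mathbb{R}^d\ \Big(\sigma(\boldsymbol{y}) \wedge \textstyle\sum_{i=1}^d (x_i - y_i)^2 < \varepsilon^2\Big),
\end{equation*}
which avoids square roots (and thus stays within the language of ordered rings) by working with the squared Euclidean distance. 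This is a bona fide sentence in the first-order theory of the reals with free variables $x_1,\ldots,x_d$.

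Next I would invoke the Tarski--Seidenberg theorem, cited earlier as~\cite{Tar51} and used repeatedly throughout the paper, to conclude that there exists a quantifier-free first-order formula $\tau(\boldsymbol{x})$ equivalent over $\mathbb{R}$ to the displayed formula above. By definition of semi-algebraic sets as the quantifier-free first-order definable subsets of $\mathbb{R}^d$, the set $\overline{X} = \{\boldsymbol{x}\in\mathbb{R}^d : \tau(\boldsymbol{x})\}$ is semi-algebraic.

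There is no substantial obstacle here: everything reduces to the observation that topological closure is a first-order notion over the reals (because the metric is definable by a polynomial inequality once squared), combined with the fact, used throughout the paper, that the projection of a semi-algebraic set is semi-algebraic. One alternative route would be to write $X$ in disjunctive normal form as a finite union of basic semi-algebraic sets and to analyse the closure of each conjunction of polynomial (strict and non-strict) inequalities directly; however, since naively replacing strict inequalities by non-strict ones does not in general yield the closure (for example, the closure of the empty set $\{x : x>0 \wedge x<0\}$ is not $\{0\}$), the quantifier-elimination route is cleaner and avoids a delicate combinatorial argument.
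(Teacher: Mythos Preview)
Your proposal is correct and follows essentially the same approach as the paper: write membership in $\overline{X}$ as the first-order formula $\forall \varepsilon>0\,\exists \boldsymbol{y}\,(\sigma(\boldsymbol{y})\wedge \boldsymbol{y}\in B(\boldsymbol{x},\varepsilon))$ and invoke Tarski--Seidenberg. Your version is in fact slightly more explicit than the paper's, since you spell out the ball condition via the squared Euclidean distance and note why the alternative route of directly manipulating inequalities is unreliable.
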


\begin{proof}
  Let $\sigma$ be a sentence in the first-order theory of reals such
  that $X=\lbrace x\in\mathbb{R}^d\mid \sigma(x)\rbrace$. Whence
\begin{equation*}
  \overline{X}=\lbrace x\in\mathbb{R}^d\mid 
\forall \varepsilon>0,\exists y\in\mathbb{R}^d,\sigma(y)\wedge y\in B(x,\varepsilon) \rbrace .
\end{equation*}
\end{proof}

\newpage
\bibliographystyle{plain}
\bibliography{refs}

\end{document}